\newcommand{\ket}[1]{\left| #1 \right\rangle}
\newcommand{\bra}[1]{\left\langle #1 \right|}
\newcommand{\braket}[2]{\left\langle #1 \middle| #2 \right\rangle}
\newcommand{\kl}[1]{\ket{\lambda_{#1}}}
\newcommand{\bl}[1]{\bra{\lambda_{#1}}}
\newcommand{\bk}[2]{\braket{\lambda_{#1}}{\lambda_{#2}}}
\newcommand{\svl}[4]{S^{#1_{#2}}VS^{#1_{#3}}V\!\dots S^{#1_{#4}}V}
\newcommand{\sv}[3]{S^{#1_{#2}}V\!\dots S^{#1_{#3}}V}
\newcommand{\vs}[3]{VS^{#1_{#2}}\!\dots VS^{#1_{#3}}}
\DeclarePairedDelimiter{\ceil}{\lceil}{\rceil}
\DeclarePairedDelimiter{\floor}{\lfloor}{\rfloor}
\def\th@plain{%
  \thm@notefont{}
  \itshape 
}
\def\th@definition{%
  \thm@notefont{}
  \normalfont 
}
\newtheorem{theorem}{Theorem}
\newtheorem{corollary}{Corollary}[theorem]
\newtheorem{lemma}{Lemma}
\theoremstyle{definition}
\newtheorem{definition}{Definition}
\newcommand{\diagram}[1]{\foreach \x in #1 {--++(0,\x)--++(1,0)}--cycle}
\newcommand{\diagraml}[1]{node[anchor=north west]{$(#1)$}\foreach \x in #1 {--++(0,\x)--++(1,0)}--cycle}
\newcommand{\drawgrid}[2]{\begin{scope}\begin{pgfinterruptboundingbox} 
\path #1 ++(0,-0.2) -- ++(21,21) -- ++(-0.4,0) -- ++(-21,-21) -- cycle [invclip]; \end{pgfinterruptboundingbox}
\draw[white,very thick] #1 \foreach \x [evaluate=\x as \z using \x - 1] in #2 {\ifnum \x=0 +(0,-0.1)--+(0,0.1) \fi \ifnum \x>1 \foreach \y in {1,...,\z} {++(0,1) +(-0.1,0)--+(0.1,0)} \fi \ifnum \x>0 ++(0,1) \fi ++(1,0)};\end{scope}}
\newcommand{\diagce}[4]{
	\draw #1 node[anchor=north west]{\hspace{-6pt}\begingroup\addtolength{\jot}{-0.75em}$\begin{aligned}&(#2)\\ &c,e=\textstyle#3,\textstyle#4\end{aligned}$\endgroup}
	\foreach \x in #2 {--++(0,\x)--++(1,0)}--cycle;
	\drawgrid{#1}{#2}}
\newcommand{\eequiv}[3]{\draw (#1,#2)++(-0.4,-0.2)--++(0,1)--++(5*#3-4.2+#2,0)--++(0,-1);}
\newcommand{\cequiv}[3]{\draw (#1,#2)++(-0.2,-0.2)--++(0,0.6)--++(5*#3-4.6+#2,0)--++(0,-0.6);}
\Crefname{section}{Sec.}{Secs.}
\begin{document}

\tikzset{invclip/.style={clip,insert path={{[reset cm]
      (-16383.99999pt,-16383.99999pt) rectangle (16383.99999pt,16383.99999pt)
    }}}}

\title{Explicit diagrammatic solution of normalised, nondegenerate Rayleigh-Schr\"odinger perturbation theory}

\author{Joel C. Pommerening}
\author{David P. DiVincenzo}
\thanks{Joel C. Pommerening developed the proofs, performed the calculations, made the figures, and wrote the manuscript. Both authors discussed the results, contributed to the literature search, verified the calculations, and revised the manuscript. David P. DiVincenzo supervised the project.}
\affiliation{Institute for Quantum Information, RWTH Aachen University, D-52056 Aachen, Germany}
\affiliation{Peter Gr\"unberg Institute, Theoretical Nanoelectronics, Forschungszentrum J\"ulich, D-52425 J\"ulich, Germany}
\affiliation{J\"ulich-Aachen Research Alliance (JARA), Fundamentals of Future Information Technologies, D-52425 J\"ulich, Germany}

\date{\today}

\begin{abstract}
We solve the coupled recurrence relations for eigenenergies and -vectors in nondegenerate Rayleigh-Schr\"odinger perturbation theory under the constraint that the approximate eigenvector be normalised to $1$ in every order.
The series can be expressed in terms of diagrams that were first introduced by C. Bloch [Nucl. Phys. \textbf{6}, 329 (1958)] for the degenerate, unnormalised case.
Normalisation increases the number of terms and introduces a nontrivial dependence on the diagrams' topology to the coefficients.
\end{abstract}

\maketitle

\section{Introduction}
Rayleigh-Schr\"odinger perturbation theory (RSPT)~\cite{Schroedinger26} is a simple, yet powerful tool for approximating Hamiltonian spectra and eigenfunctions. Its application is so ubiquitous that anyone who has ever done any quantum mechanics calculations is likely to have used it at some point.
Corrections to the eigenvalues and eigenvectors of an unperturbed problem are given as as a power series in a small perturbation.
In the standard textbook approach (e.g. \cite[Ch.~11]{Cohen86}) corrections are determined recursively as a function of all lower order terms.
Explicit expressions have long been known as well~\cite{Kato49,Kato50,Bloch58,Huby61,Salzman68,Silverstone70}, but, notably, not for the \emph{normalised} eigenfunctions.

Kato~\cite{Kato49,Kato50} gave the first explicit solution of (generally degenerate) RSPT.
Instead of choosing an arbitrary eigenbasis, he stated the results in terms of projectors onto (possibly still degenerate) eigenspaces.
Bloch~\cite{Bloch58} modified these projectors, reducing the number of terms by a factor of $2$.
He assumed a perturbation that completely lifts the degeneracy and concerned himself with the construction of an appropriate basis of the degenerate unperturbed eigenspace (``\emph{les `bonnes fonctions' non perturb{\'e}es}'').
Bloch also introduced the diagrammatic representation described below, as well as an alternative choice of \emph{bonnes fonctions} that allowed for the restriction to a subset of diagrams, called convex, further reducing the number of terms in order $n$ by a factor of $(n+1)/2$.

Earlier, following Brueckner~\cite{Brueckner55}, Goldstone~\cite{Goldstone57} used Feynman diagrams to explicitly write down corrections to the nondegenerate ground state of an interacting fermionic system.
Huby~\cite{Huby61} restated Bloch's results, in a form suggested by Brueckner \cite{Brueckner55}, where the same terms are constructed in a different way.
He can give explicit formulas for eigenvectors and not merely projectors because he considered the nondegenerate case.
These expressions for the eigenvectors were not normalised. 
Salzman~\cite{Salzman68} similarly focused on the nondegenerate, unnormalised case, and developed a new diagrammatic formalism, set up to collect equivalent terms.
This in  principle allows for a further reduction in the number of terms.
The rules he gave for constructing diagrams do not provide this reduction automatically however.
Equivalent terms still had to be collected together manually.
From the more mathematical direction, the equivalence of Bloch's diagram counting with that of the leaves of ordered trees can be found in the work of Stanley \cite{10.5555/2124415}. More recent surveys have related term counting in Rayleigh-Schr{\"o}dinger perturbation theory to other combinatorial objects \cite{https://doi.org/10.1002/qua.23201}.

Silverstone and Holloway~\cite{Silverstone70} derived alternative formulas for the nondegenerate eigenvalues and their unnormalised eigenvectors that formally lead to the least number of terms, however, at the price of evaluating a large number of derivatives.
Quantifying the number of terms in the resulting Silverstone-Holloway expression, beyond `more than the number of partitions of $n$ into positive integers,' is a non-trivial task.
More recently, Magesan and Gambetta~\cite{Magesan20} developed a formalism that preserves norms exactly by perturbatively series expanding the generator of a unitary operator.
For a given order, the canonical transformation of the Hamiltonian by that unitary is then in turn series expanded.
This method does not directly give an explicit series for eigenvectors and -values.
Bloch's original work still finds application in the context of effective Hamiltonians in Jordan and Farhi's arbitrary order perturbative gadgets~\cite{Jordan08}.

In the present work, we consider anew the perturbation of a nondegenerate eigenvalue in standard RSPT.
The phase and normalisation freedom of the eigenvector significantly influences the expansion.
To the best of our knowledge, here we give the first explicit solution which preserves the norm of the eigenvector at $1$ in every order.

In \cref{RSPT}, we briefly review nondegenerate RSPT and the main results of Bloch \cite{Bloch58} that we build upon.
Our new results are derived in \cref{main}.
We comment on their efficiency and how they can be improved in \cref{uniqueness}.
Finally, in \cref{example}, we focus more on the diagrammatic aspect and show how our results work in practice, going up to fourth order, and conclude in \cref{conclusio}.

\section{Rayleigh-Schr\"odinger perturbation theory \label{RSPT}}
\subsection{Recursive definition}
Given a Hamiltonian $H=H_0+\epsilon V$ parametrised by $\epsilon\in[0,1]$, we assume we can expand any of its eigenvalues $\lambda$, and the corresponding eigenvector $\ket\lambda$, in a power series in $\epsilon$,
\begin{equation}
\lambda=\sum_{n=0}^\infty\epsilon^n\lambda_n,\qquad\ket\lambda=\sum_{n=0}^\infty\epsilon^n\ket{\lambda_n},
\end{equation}
i.e. they satisfy
\begin{equation}\label{RSPTdef}
\left(H_0+\epsilon V\right)\sum_{n=0}^\infty\epsilon^n\ket{\lambda_n}=\sum_{n,m=0}^\infty\epsilon^{n+m}\ket{\lambda_m}\lambda_n.
\end{equation}
Sorting \cref{RSPTdef} by powers of $\epsilon$ and equating the coefficients gives in zeroth order $H_0\ket{\lambda_0}=\lambda_0\ket{\lambda_0}$.
Usually $H_0$ is chosen to be analytically diagonalisable; we call $\lambda_0$ and $\ket{\lambda_0}$ the unperturbed eigenenergies and -vectors, respectively.
Here we further assume that they are discrete and nondegenerate, and we define the complementary projectors
\begin{equation}
P_{\lambda}=\ket{\lambda_0}\bra{\lambda_0},\qquad Q_{\lambda}=1-P_{\lambda}.
\end{equation}
For nonzero powers $n\in\mathbb{N}$ of $\epsilon$, \cref{RSPTdef} gives
\begin{equation}\label{baseRR}
H_0\kl n+V\kl{n-1}=\sum_{m=0}^n\kl{n-m}\lambda_m.
\end{equation}
We can then consider the $P_{\lambda}$- and $Q_{\lambda}$-components of \cref{baseRR} separately to derive equations for $\lambda_n$ and $\kl n$ respectively.
For the energies we get
\begin{equation}\label{P0comp}
\lambda_n=\bl 0 V\kl{n-1}-\sum_{m=1}^{n-1}\bk{0}{n-m}\lambda_m.
\end{equation}
Note, here and throughout the paper, the convention
\begin{equation}\label{nullsum}
\sum_{n=N}^M A_n=0\quad\text{for any sequence }A_n\text{ if }M<N
\end{equation}
applies.
The $Q_{\lambda}$-component gives
\begin{equation}\label{Q0comp}
Q_{\lambda}\kl n=\frac{Q_{\lambda}}{\lambda_0-H_0}\left(V\kl{n-1}-\sum_{m=1}^{n-1}\kl{n-m}\lambda_m\right),
\end{equation}
where on the right-hand side we see appearing the reduced resolvent
\begin{equation}
S=\frac{Q_{\lambda}}{\lambda_0-H_0}=Q_{\lambda}\frac{1}{\lambda_0-H_0}Q_{\lambda}=\sum_{\lambda^\prime\neq\lambda}\frac{\ket{\lambda^\prime_0}\bra{\lambda^\prime_0}}{\lambda_0-\lambda^\prime_0}
\end{equation}
which, since by assumption $\lambda_0$ is nondegenerate, is well-defined. For the sake of a compact notation, there is no index $\lambda$ on $S$, but it should be remembered as implicit. For later reference we also define powers of $S$, where it will be convenient to define $S^0$ separately \cite{Kato49}
\begin{equation}
S^0=-P_{\lambda},\qquad S^k=\frac{Q_{\lambda}}{\left(\lambda_0-H_0\right)^k}\quad\text{for }k\in\mathbb{N}.
\end{equation}

Clearly $\bk0n$ is not constrained by \cref{Q0comp} or \cref{P0comp}, and by extension \cref{RSPTdef}.
The simplest choice, and one employed by many authors~\cite{Huby61,Salzman68,Silverstone70}
, is $\bk0n=0$.
But it can be more convenient to use this degree of freedom to normalise the eigenvector to $1$ in every order, i.e.
\begin{equation}\label{norm}
\sum_{n,m=0}^N\epsilon^{n+m}\bk n m=1+\mathcal{O}\left(\epsilon^{N+1}\right)\qquad\forall\,N\in\mathbb{N}_0.
\end{equation}
This way the calculated eigenvectors immediately form an orthonormal basis (up to higher order terms) and can be used straightforwardly to calculate expectation values without having to manually renormalise.
\Cref{norm} requires the unperturbed eigenvector to be normalised, $\bk00=1$, and fixes the real part of $\bk0n$.
Choosing to set the imaginary part to $0$, we arrive at
\begin{equation}\label{zeron}
\bk0n=-\frac12\sum_{m=1}^{n-1}\bk m{n-m}.
\end{equation}
This is not a unique phase choice (see \cref{uniqueness}), though it is the conventional one~\cite[Ch.~11]{Cohen86}. 
In \cref{RSPTRR}, we collect \cref{P0comp,Q0comp,zeron}.
It is not a new result but rarely stated explicitly for arbitrary orders.
\begin{theorem}[Cohen-Tannoudji et al.]\label{RSPTRR}
The sequences $\lambda_n$ and $\kl n$, $n\in\mathbb{N}$ that satisfy the coupled recurrence relations
\begin{align}
\lambda_n &= \bl 0 \left(V\kl{n-1}-\sum_{m=1}^{n-1}\kl{n-m}\lambda_m\right), \label{energydef}\\
\kl n &= -\frac12\kl0\sum_{m=1}^{n-1}\bk m{n-m}+\frac{Q_{\lambda}}{\lambda_0-H_0}\left(V\kl{n-1}-\sum_{m=1}^{n-1}\kl{n-m}\lambda_m\right),\label{vectordef}
\end{align}
where the starting values $\lambda_0$, $\kl0$ are an eigenvalue and corresponding unit eigenvector of $H_0$, respectively, solve \cref{RSPTdef} while preserving the normalisation of $\sum_{n=0}^N\epsilon^n\kl n$ for every $N\ge0$ \cite[Ch.~11]{Cohen86}.
\end{theorem}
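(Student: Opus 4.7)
The plan is to verify by induction on $n$ that the sequences defined by \cref{energydef,vectordef} both satisfy \cref{RSPTdef} order by order and maintain \cref{norm}. The base case $n=0$ is given: $\kl 0$ is a unit eigenvector of $H_0$ with eigenvalue $\lambda_0$, so $\bk 0 0 =1$ and the zeroth-order part of \cref{RSPTdef} holds. Assuming $\lambda_m$ and $\kl m$ are already determined for $m<n$, the task is to construct $\lambda_n$ and $\kl n$.

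First I would expand \cref{RSPTdef} in powers of $\epsilon$ and equate coefficients to obtain \cref{baseRR}, the full constraint at order $n$. Splitting this constraint by the complementary projectors $P_\lambda$ and $Q_\lambda$ gives two independent equations. The $P_\lambda$-component, once $\bl 0 H_0=\lambda_0 \bl 0$ is used, reproduces \cref{energydef} for $\lambda_n$. The $Q_\lambda$-component can be solved for $Q_\lambda\kl n$ by applying the reduced resolvent $S$, producing the second term in \cref{vectordef}. Together they fix $\lambda_n$ and $Q_\lambda\kl n$ uniquely, while leaving $P_\lambda\kl n=\bk 0 n \kl 0$ as a free scalar that does not enter \cref{baseRR} (since the $P_\lambda$-component of the left-hand side is $\lambda_0\bk 0 n \kl 0$ and is matched by the $m=n$ term on the right).

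Next I would use this residual freedom to enforce \cref{norm}. Expanding the constraint at order $N\ge 1$ gives $\sum_{n+m=N}\bk n m =0$; isolating the $n=0$ and $m=0$ terms yields
\begin{equation*}
\bk 0 N +\bk N 0 =-\sum_{m=1}^{N-1}\bk m {N-m}.
\end{equation*}
Because $\bk N 0 =\overline{\bk 0 N}$, this fixes only $2\,\mathrm{Re}\,\bk 0 N$. The right-hand side is automatically real: pairing the $m$-th and $(N-m)$-th terms produces $\bk m{N-m}+\bk{N-m}{m}=2\,\mathrm{Re}\,\bk m{N-m}$, and when $N$ is even the self-paired middle term $\bk{N/2}{N/2}=\|\kl{N/2}\|^2$ is real. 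Setting $\mathrm{Im}\,\bk 0 N =0$, the conventional phase choice, therefore consistently solves the constraint and recovers \cref{zeron}. Combining $\bk 0 N\kl 0$ with the $Q_\lambda$-piece produced earlier gives \cref{vectordef}.

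The argument is essentially an orderly bookkeeping exercise rather than a deep calculation; the only place requiring care is the reality of the right-hand side above, which is what makes the phase choice \cref{zeron} admissible rather than overdetermined. Once that is established, an induction step closes: from $\{\lambda_m,\kl m\}_{m\le N}$ one constructs $\lambda_{N+1}$ from \cref{energydef} and $\kl{N+1}$ from \cref{vectordef}, automatically satisfying both \cref{RSPTdef} at order $N+1$ and \cref{norm} through order $N+1$.
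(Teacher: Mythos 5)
Your proposal is correct and follows essentially the same route as the paper: expanding \cref{RSPTdef} order by order into \cref{baseRR}, projecting with $P_\lambda$ and $Q_\lambda$ to obtain \cref{P0comp,Q0comp}, and using the residual freedom in $\bk0n$ together with the normalisation condition \cref{norm} to arrive at \cref{zeron}. Your explicit check that the right-hand side of the normalisation constraint is real (so that fixing only $\mathrm{Re}\,\bk0N$ suffices) is a detail the paper leaves implicit, but the argument is the same.
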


\subsection{Bloch sequences and diagrams}
\begin{figure}
\begin{center}
\begin{tikzpicture}[scale=0.5]
\draw 	(0,5.5) node[left=0.5cm]{$n=1$}
		(0,1) node[left=0.5cm]{$n=2$}
		
		(0,5) node[anchor=north east]{$(k_1)=$}\diagraml{{1}}
		(0,0) node[anchor=north east]{$(k_1,k_2)=$}\diagraml{{2,0}}
		(3,0)\diagraml{{1,1}}
		(6,0)\diagraml{{0,2}};
\drawgrid{(0,0)}{{2,0}}
		
\begin{scope}[xshift=13cm]
\draw[<->] (0,-0.5)--node[fill=white]{$n$}(6,-0.5);
\draw[<->] (6.5,0)--node[fill=white]{$n$}(6.5,6);
\draw[<->] (-0.5,0)--node[fill=white]{$k_1$}(-0.5,2);
\draw[<->] (0.5,2)--(0.5,3);
\draw (0.5,2.5)node[left]{$k_2$};

\draw (5,6)--(6,6)--(0,0)--(0,2)--(1,2)--(1,3)--(2,3);
\draw[dashed] (2,3)..controls (2,4)..(3,4)..controls (4,4) and (4,6)..(5,6);
\end{scope}
\end{tikzpicture}
\end{center}
\caption{Bloch diagrams. On the left are all diagrams for $n=1$ and $n=2$, including the non-convex $(0,2)$. On the right is some order $n$ diagram. Following Bloch~\cite{Bloch58}, we sometimes use curved, dashed lines to represent an arbitrary diagram.\label{diagrams_explained}}
\end{figure}
Bloch's seminal paper on degenerate RSPT~\cite{Bloch58} was the main inspiration for this paper.
In this subsection, we summarise the relevant definitions and results that we adopt from there.
These also apply to the nondegenerate case straightforwardly, see e.g.~\cite{Huby61}, and are adapted to our notation accordingly.
The result for the eigenvector is
\begin{equation}\label{vectorbloch}
\ket{\overline{\lambda_n}}=\sideset{}{'}\sum_{\{k\}_n} \svl k 1 2 n \kl0
\end{equation}
where the sum is over Bloch sequences of length $n$ defined by
\begin{equation}
{\{k\}_n}:\,\,\,\,\,\,k_i\in\mathbb{N}_0,\,i=1,2,\dots,n,\qquad\sum_{i=1}^n k_i=n,
\label{Blseq}
\end{equation}
and the prime indicates it is restricted to those sequences that satisfy
\begin{equation}\label{convex}
\sum_{i=1}^p k_i\ge p,\quad p=1,2,\dots,n-1.
\end{equation}
$\ket{\overline{\lambda_n}}$ is a solution to \cref{RSPTdef}; it is not normalised, but satisfies the condition $\left\langle\lambda_0\middle|\overline{\lambda_n}\right\rangle=0$, so that $\left\langle\lambda_0\middle|\overline{\lambda}\right\rangle=1$.
We distinguish it from the normalised one defined in \cref{RSPTRR} with an overline. 

Using \cref{vectorbloch}, \cref{P0comp} becomes
\begin{equation}\label{energybloch}
\lambda_n=\left\langle \lambda_0 \middle| V\middle| \overline{\lambda_{n-1}} \right\rangle=\sideset{}{'}\sum_{\{k\}_{n-1}} \bl0V\svl k 1 2 {n-1}\kl0.
\end{equation}

Bloch sequences can be represented graphically as staircase diagrams where step $i$ has height $k_i$ and width $1$, as illustrated in \cref{diagrams_explained}. The diagrams satisfying \cref{convex} are those that always stay above the diagonal. They are also called convex, and are known in the combinatorics literature as Dyck paths~\cite[p.76]{Flajolet09}.

\section{Stepwise diagrammatic solution\label{main}}
We find that when we require a normalised state vector, given the phase choice \cref{zeron}, the basic structure of the solution is retained:
\begin{theorem}\label{ceform}
The coupled recurrence relations in \cref{RSPTRR} are solved by $\lambda_n$, $\kl n$ of the form
\begin{align}
\lambda_n &= \sum_{\{k\}_{n-1}}e\!\left(k_1,k_2,\dots,k_{n-1}\right) \bl0 V\svl k 1 2 {n-1} \kl0, \label{edef}\\
\kl n &= \sum_{\{k\}_n} c\!\left(k_1,k_2,\dots,k_n\right) \svl k 1 2 n \kl0.\label{cdef}
\end{align}
where $c$, $e$ are rational-valued functions.
\end{theorem}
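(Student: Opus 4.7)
The plan is to proceed by strong induction on $n$, feeding the claimed form into the recurrences of \cref{RSPTRR}. Both base cases are immediate: $\kl 0$ is given, and for $n=1$ the sum in \cref{zeron} is empty, so $\lambda_1 = \bl 0 V\kl 0$ realises \cref{edef} with $e$ on the empty sequence equal to $1$, while $\kl 1 = SV\kl 0$ realises \cref{cdef} with $c(1) = 1$.

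For the inductive step on $\kl n$, I would handle the three types of terms in \cref{vectordef} separately. The term $SV\kl{n-1}$ is most direct: substituting the inductive hypothesis and writing $SV = S^1 V$ promotes each length-$(n-1)$ Bloch sequence $(k'_1,\dots,k'_{n-1})$ to $(1,k'_1,\dots,k'_{n-1})$, a Bloch sequence of length $n$ summing to $n$. For $-S\kl{n-m}\lambda_m$, I would use $\kl 0\bl 0 = -S^0$ to splice the operator chain from $\kl{n-m}$ onto the scalar chain representing $\lambda_m$, inserting an $S^0$ as a bridge; the outer $S$ then promotes the first entry $k'_1$ to $k'_1 + 1$, producing a Bloch sequence $(k'_1 + 1, k'_2,\dots,k'_{n-m}, 0, k''_1,\dots,k''_{m-1})$ of length $n$ summing to $n$. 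For $-\tfrac12\kl 0 \bk{m}{n-m}$, I would Hermitian-conjugate the inductive expression for $\kl m$ (valid because the hypothesis supplies real, in fact rational, coefficients), contract with $\kl{n-m}$ to produce a scalar chain, and promote the outer $\kl 0$ to an $S^0$ prefix via $\kl 0\bl 0 = -S^0$. The middle collision $S^{k_1}S^{k'_1}$ in the contraction is resolved term-wise---it equals $S^{k_1+k'_1}$ when both exponents are positive, $0$ when exactly one vanishes, and $-S^0$ when both vanish---with the net effect of a length-$n$ Bloch sequence beginning with $S^0$.

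Almost the same argument handles \cref{energydef} for $\lambda_n$: substituting into $\bl 0 V\kl{n-1}$ yields immediately a sum over length-$(n-1)$ Bloch sequences, while each $\bk 0{n-m}\lambda_m$ is a product of two scalars (with $\bk 0{n-m}$ first expanded via \cref{zeron}) that I would join through $\kl 0\bl 0 = -S^0$ into a single length-$(n-1)$ chain summing to $n-1$. Rationality of $c$ and $e$ then follows inductively, because the only operations introduced throughout are additions, sign flips, and factors of $\pm\tfrac12$.

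The chief obstacle is not depth but bookkeeping: ensuring the Bloch constraint ``length equals sum'' is preserved in every branch. One must verify that each inserted $S^0$ contributes $1$ to the length and $0$ to the sum, that outer $S$-multiplication bumps the leading entry by $1$ (adding $1$ to the sum), and that outer $V$-multiplication extends the chain by a trailing $V$ without touching the $S$-powers. A minor nuisance is the degenerate merges $S\cdot S^0 = 0$ and $S^0\cdot S^{k\geq 1} = 0$, which simply zero out the corresponding Bloch coefficients while leaving them rational.
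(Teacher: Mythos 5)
Your proposal is correct and follows essentially the same route as the paper: strong induction, substitution of the hypothesised forms into the recurrences of \cref{RSPTRR}, and term-by-term reduction via $\kl0\bl0=-S^0$ and the $S^kS^j$ multiplication rules, with the same bookkeeping verifying that each splice preserves the Bloch condition (length equals sum). The only cosmetic deviation is that for $\bk0{n-m}$ in the energy recurrence the paper reads the scalar directly off \cref{cdef} using $\bl0 S^{k_1}=-\delta_{0,k_1}\bl0$ rather than re-expanding via \cref{zeron}; both are equivalent and yield the same conclusion.
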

Note that the absence of the primes on the sums in \cref{edef,cdef} means that we must also allow non-convex diagrams (recall \cref{convex}).

Note further that we have introduced again a series for the eigenvalue \cref{edef}, which may appear unnecessary given the existing result of Bloch \cref{energybloch}, with known, simple values for the coefficients $e$. One should, however, view \cref{edef} as an auxiliary equation, not useful for actual evaluation of $\lambda_n$ (Bloch's result is best for that), but quite useful for obtaining the coefficients $c$ in \cref{cdef} according to the recurrence that is about to be derived. This is made possible by the non-uniqueness of the perturbation theory in terms of Bloch diagrams, which we explore more thoroughly in \cref{uniqueness}.

\begin{proof}
This is by complete induction on $n$.
For $n=1$, \cref{RSPTRR} gives the first-order corrections $\lambda_1=\bl0V\kl0$ and $\kl1=S^1V\kl0$.
These are of the form of \cref{edef,cdef} with $e(\emptyset)=c(1)=1$, proving the base case.

For $n>1$, assume \cref{edef,cdef} hold for all $m<n$.
We compute $\lambda_n$, $\kl n$ using \cref{RSPTRR}:
\begin{subequations}\label{eder}
\begin{align}
\lambda_n &=\bl0V\sum_{\{k\}_{n-1}}c\!\left(k_1,\dots,k_{n-1}\right)\sv k1{n-1}\kl0 \nonumber\\ &\phantom{=}- \bl0\sum_{m=1}^{n-2}\sum_{\{k\}_{n-m}}c\!\left(k_1,\dots,k_{n-m}\right)\sv k1{n-m}\kl0 \\ &\phantom{=}\times\sum_{\{j\}_{m-1}}e\!\left(j_1,\dots,j_{m-1}\right)\bl0V\sv j1{m-1}\kl0 \nonumber\\
&=\sum_{\{k\}_{n-1}}c\!\left(k_1,\dots,k_{n-1}\right)\bl0V\sv k1{n-1}\kl0 \nonumber\\ &\phantom{=}- \sum_{m=1}^{n-2}\sum_{\substack{\{k\}_{n-m}\\ k_1=0}}\sum_{\{j\}_{m-1}}c\!\left(0,k_2,\dots,k_{n-m}\right)e\!\left(j_1,\dots,j_{m-1}\right) \\ &\phantom{=}\times\bl0V\sv k2{n-m}S^0V\sv j1{m-1}\kl0,\nonumber
\end{align}
\end{subequations}
where in the second line we have changed the upper limit on the sum over $m$ from $n-1$ to $n-2$ since $\bk01=0$, and in the last line we used $\bl0 S^{k_1}=-\delta_{0,k_1}\bl0$ and $\kl0\bl0=-S^0$.
The result is of the form (\ref{edef}) with $e\!\left(k_1,\dots,k_{n-1}\right)$ given by \cref{eRR}.
In \cref{eRR}, the $\delta_{n-m,K_{n-m}}$ ensures that the arguments of $c$ and $e$ are Bloch sequences.

We repeat the same reasoning for the eigenvector
\begin{subequations}\label{cder}
\begin{align}
&\phantom{=}\kl n \nonumber\\&= -\frac12\kl0\sum_{m=1}^{n-1}\sum_{\{k\}_m}\sum_{\{j\}_{n-m}}\hspace{-2.66pt}c\!\left(k_1,\dots,k_m\right)c\!\left(j_1,\dots,j_{n-m}\right)\bl0\vs km1 \sv j1{n-m}\kl0\nonumber\\
&\phantom{=}+S^1V\sum_{\{k\}_{n-1}}c\!\left(k_1,\dots,k_{n-1}\right)\sv k1{n-1}\kl0 -S^1\sum_{m=1}^{n-1}\sum_{\{k\}_{n-m}}\sum_{\{j\}_{m-1}}c\!\left(k_1,\dots,k_{n-m}\right)\nonumber\\
&\phantom{=}\times e\!\left(j_1,\dots,j_{m-1}\right)\sv k1{n-m} \kl0\bl0V\sv j1{m-1}\kl0\\
&=\sum_{m=1}^{n-1}\sum_{\{k\}_m}\sum_{\{j\}_{n-m}}\frac12\left(1-\delta_{0,k_1}-\delta_{0,j_1}\right)c\!\left(k_1,\dots,k_m\right)c\!\left(j_1,\dots,j_{n-m}\right)\nonumber\\ &\phantom{=}\times S^0V\sv km2 S^{k_1+j_1}V\sv j2{n-m}\kl0 \nonumber\\ &\phantom{=}+\sum_{\{k\}_{n-1}}c\!\left(k_1,\dots,k_{n-1}\right)S^1V\sv k1{n-1}\kl0\nonumber\\
&\phantom{=}+\sum_{m=1}^{n-1}\sum_{\{k\}_{n-m}}\sum_{\{j\}_{m-1}}\left(1-\delta_{0,k_1}\right)c\!\left(k_1,\dots,k_{n-m}\right)e\!\left(j_1,\dots,j_{m-1}\right)\nonumber\\&\phantom{=}\times S^{k_1+1}V\sv k2{n-m}S^0V\sv j1{m-1}\kl0.
\end{align}
\end{subequations}
Here we have again used $\kl0\bl0=-S^0$ as well as
\begin{equation}
S^kS^j=\begin{cases}-S^0=-S^{k+j} & \text{if }k=j=0,\\ S^{k+j} & \text{if }k,j>0,\\0 & \text{else.}\end{cases}
\end{equation}
The result is of the form~(\ref{cdef}) with $c(k_1,\dots,k_n)$ given by \cref{cRR}.
Note that in \cref{eder,cder} (and therefore in equations throughout the following) argument lists can be empty.
Specifically, for $m=1$, $e(j_1,\dots,j_{m-1})=e(\emptyset)$.
This corresponds to an appearance of $\lambda_1$ in \cref{RSPTRR}.
\end{proof}
\begin{corollary}\label{ceRR}
The functions $c$ and $e$ defined in \cref{ceform} satisfy ($n\ge 2$)
\begin{align}
&\phantom{=}e\!\left(k_1,\dots,k_{n-1}\right)\nonumber\\ &= c\!\left(k_1,\dots,k_{n-1}\right) - \sum_{m=1}^{n-2} \delta_{0,k_{n-m}} \delta_{n-m,K_{n-m}} c\!\left(0,k_1,\dots,k_{n-m-1}\right) e\!\left(k_{n-m+1},\dots,k_{n-1}\right), \label{eRR}\\
&\phantom{=}c\!\left(k_1,\dots,k_n\right)\nonumber\\ &=\begin{dcases} \sideset{}{'}\sum_{m=1}^{n-1} \tfrac{1-\delta_{m,K_m}-\delta_{m,K_{m+1}}}2 c(m-K_m,\overbrace{k_m,\dots,k_2}^{\mbox{\rm\tiny{decreasing index}}}) c\!\left(K_{m+1}-m,k_{m+2},\dots,k_n\right), & k_1=0,\\
c\!\left(k_2,\dots,k_n\right), & k_1=1,\\
\sum_{m=1}^{n-1} \delta_{0,k_{n-m+1}}\delta_{n-m,K_{n-m}-1} c\!\left(k_1-1,k_2,\dots,k_{n-m}\right) e\!\left(k_{n-m+2},\dots,k_n\right), & k_1>1,\end{dcases}\label{cRR}
\end{align}
with the primed sum restricted to $k_{m+1}\ge m-K_m \ge 0$ so that all the arguments of $c$ are non-negative, and $K_m=\sum_{i=1}^m k_i$.
For $m=1$, in some of the argument lists, the initial index is smaller than the final index; as in \cref{nullsum} such an argument list should be interpreted as an empty set.
The starting values of $c$ and $e$ ($n=1$) are
\begin{equation}
c(1)=1,\qquad e(\emptyset)=1.\label{ce0}
\end{equation}
\end{corollary}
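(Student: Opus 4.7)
The plan is to extract \cref{eRR,cRR,ce0} directly from the intermediate calculations \cref{eder,cder} in the proof of \cref{ceform}. Since that theorem has already established that $\lambda_n$ and $\kl n$ admit the Bloch-diagrammatic expansions \cref{edef,cdef}, the task reduces to identifying, on the right-hand sides of \cref{eder,cder}, the coefficient of each particular operator $\bl 0 VS^{k_1}V\cdots S^{k_{n-1}}V\kl 0$ or $S^{k_1}V\cdots S^{k_n}V\kl 0$. The base case \cref{ce0} is immediate from $\lambda_1=\bl 0 V\kl 0$ and $\kl 1=S^1V\kl 0$, the first-order outputs of \cref{RSPTRR}.

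For the recurrence on $e$, I would read off the two contributions in \cref{eder}. The first sum contributes $c(k_1,\dots,k_{n-1})$ directly. In the second sum, the operator factor has the form $\bl 0 VS^{k_2}V\cdots S^{k_{n-m}}VS^0VS^{j_1}V\cdots S^{j_{m-1}}V\kl 0$, so matching it to a target $\bl 0 VS^{k'_1}V\cdots S^{k'_{n-1}}V\kl 0$ forces the identifications $k'_i=k_{i+1}$ for $i\le n-m-1$, $k'_{n-m}=0$, and $k'_{n-m+i}=j_i$. The Bloch constraint that $(0,k_2,\dots,k_{n-m})$ sum to $n-m$ becomes $K'_{n-m}=n-m$, and the internal $S^0$ fixes $k'_{n-m}=0$; these are the two Kronecker deltas in \cref{eRR}.

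For the recurrence on $c$, the three summands of \cref{cder} each fix a distinct value or range for the leading factor $S^{k_1}$ in the target. The normalisation term contains $\kl 0 \bl 0 = -S^0$, producing a leading $S^0$, so it contributes only when $k_1=0$. Its inner adjacent pair of $S$-factors combines into $S^{k_1+j_1}$, yielding a product of two $c$-factors: one with the reversed index order $(k_m,\dots,k_2)$ coming from the bra $\bl 0 VS^{k_m}\cdots VS^{k_1}$, and one with the natural order; the leading entries are fixed by the Bloch constraints on $\{k\}_m$ and $\{j\}_{n-m}$ to $k_1=m-K_m$ and $j_1=K_{m+1}-m$. The $S^1V\kl{n-1}$ term produces a leading $S^1$, so it contributes only when $k_1=1$, giving $c(k_2,\dots,k_n)$. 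The third term produces a leading $S^{k_1+1}$ with $k_1\ge 1$ (the factor $1-\delta_{0,k_1}$ kills $k_1=0$), hence contributes only when the target $k_1\ge 2$; by the same argument used for $e$, the internal $S^0$ arising from $\kl 0 \bl 0$ produces the delta factors $\delta_{0,k_{n-m+1}}\delta_{n-m,K_{n-m}-1}$.

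The main bookkeeping challenge will be the $k_1=0$ branch of \cref{cRR}. First, the prefactor $\tfrac12(1-\delta_{0,k_1}-\delta_{0,j_1})$ from \cref{cder} must be translated into $\tfrac12(1-\delta_{m,K_m}-\delta_{m,K_{m+1}})$ via the identifications $k_1=m-K_m$ and $j_1=K_{m+1}-m$. Second, I would verify that the primed-sum restriction $k_{m+1}\ge m-K_m\ge 0$ is precisely the condition needed to keep $k_1$ and $j_1$ non-negative so that both $c$-arguments are valid Bloch sequences; the length constraints $(m-K_m)+k_m+\dots+k_2=m$ and $(K_{m+1}-m)+k_{m+2}+\dots+k_n=n-m$ then follow automatically from $K_n=n$. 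Once these identifications are in place, \cref{eRR,cRR} can be read off directly.
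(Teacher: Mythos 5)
Your proposal is correct and follows essentially the same route as the paper: the corollary is obtained by reading off, term by term, the coefficients of each operator string $S^{k_1}V\cdots S^{k_n}V\kl0$ (resp. $\bl0V S^{k_1}V\cdots S^{k_{n-1}}V\kl0$) from the intermediate expressions \cref{eder,cder} in the proof of \cref{ceform}, with the Kronecker deltas and the primed-sum restriction arising exactly from the index identifications and Bloch-sequence constraints you describe. No gaps.
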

Note that functions $c$ and $e$ are independent of the Hamiltonian.
We will refer to the three cases in \cref{cRR} as the $k_1=0$, $k_1=1$, and $k_1>1$ rules.
For a diagrammatic explanation of the recurrence relations, refer to \cref{dRR}.

In \cref{vectorbloch,energybloch}, all diagrams are summed up with a coefficient of $1$, or $0$ if the sum is extended to non-convex diagrams.
The same cannot be true for $c$ and $e$ because of the factor $1/2$ in $\bk0n$.
A nonzero $\bk0n$ means that some diagrams start below the diagonal, so are definitely not convex.
And the factor $1/2$ means their coefficient is generally unequal to $1$.

From calculating $c$ for all diagrams up to fourth order, cf. \cref{example}, and selected higher-order diagrams, we anticipate that it will have the following property, which will be useful in the subsequent analysis:
\begin{definition}[Crossing Property]
For any Bloch sequence $(k_1,\dots,k_n)$ let $x(k_1,\dots,k_n)$ be the number of times its associated diagram intersects the main diagonal. We say a function $f$ has the crossing property if there is another function $g$ such that
\begin{equation}
f(k_1,\dots,k_n)=g(\ceil{x(k_1,\dots,k_n)/2})
\end{equation}
for all Bloch sequences $(k_1,\dots,k_n)$, i.e. $f$ depends only on the number of times a diagram crosses \emph{from below to above} the main diagonal. Here $\ceil{\cdot}$ is the ceiling function.   
\end{definition}
If the function $c$ has the crossing property, the problem of evaluating it only needs to be performed on a set of representative diagrams. These diagrams can be taken to be the ones with the Bloch sequences (cf. Eq.~(\ref{Blseq})) 
\begin{equation}
\{k\}_{2n}=(0,2)^n,
\end{equation}
meaning 0,2 repeated $n$ times.
It is helpful below to have a separate symbol for these specific instances of the $c$ function: 
\begin{definition}\label{deft}
$t(n)=c\!\left((0,2)^n\right)$ for $n>0$, and $t(0)=c(1)=1$.
\end{definition}
$t(n)$ can be computed:
\begin{lemma}\label{tdefl}
\begin{equation}\label{tdef}
t(n)=\binom{n-\frac12}n=\frac1{2^{2n}} \binom{2n}n=\frac{\Gamma\!\left(n+\frac12\right)}{\sqrt\pi\Gamma(n+1)}.
\end{equation}
\end{lemma}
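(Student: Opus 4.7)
The plan is to specialise the recurrence for $c$ in \cref{cRR} to the Bloch sequence $(0,2)^n$, extract a closed recurrence for $t(n)$, and then solve it using the ordinary generating function $T(x)=\sum_{n\ge 0} t(n)\,x^n$. The three closed forms in \cref{tdef} agree by the standard identities $(2n-1)!!=2^n\Gamma(n+\tfrac12)/\sqrt\pi$ and $\binom{2n}{n}=(2n)!/(n!)^2$, so I would dispense with that equivalence first.

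For the main step, I apply the $k_1=0$ branch of \cref{cRR} to $(k_1,\dots,k_{2n})=(0,2)^n$. The partial sums are $K_{2j-1}=2j-2$ and $K_{2j}=2j$, so the diagonal is touched precisely at the even positions. I then split the outer sum over $m\in\{1,\dots,2n-1\}$ by the parity of $m$. For odd $m=2j-1$ ($j=1,\dots,n$), neither $\delta_{m,K_m}$ nor $\delta_{m,K_{m+1}}$ fires, so the prefactor is $\tfrac12$; the reversed first argument $(1,k_{2j-1},\dots,k_2)$ equals $(1,(0,2)^{j-1})$, and the second argument is $(1,(0,2)^{n-j})$. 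Applying the $k_1=1$ rule to both, they collapse to $t(j-1)$ and $t(n-j)$. For even $m=2j$ ($j=1,\dots,n-1$), both deltas fire, the prefactor is $-\tfrac12$, and the two arguments simplify directly to $(0,2)^j$ and $(0,2)^{n-j}$, contributing $t(j)t(n-j)$. The constraint $k_{m+1}\ge m-K_m\ge 0$ is satisfied in both cases. Collecting yields
\[
t(n)=\tfrac12\sum_{j=1}^{n} t(j-1)\,t(n-j)-\tfrac12\sum_{j=1}^{n-1} t(j)\,t(n-j),\qquad t(0)=1.
\]

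To finish, I convert this into a functional equation for $T(x)$. The first sum contributes $\tfrac{x}{2}T(x)^2$ to $T(x)-1$; the second contributes $-\tfrac12(T(x)-1)^2$ once the $j=0,n$ endpoints of the full convolution are peeled off. The resulting identity factors as $(T-1)(T+1)=xT^2$, which collapses to $(1-x)T(x)^2=1$. Taking the branch with $T(0)=1$ gives $T(x)=(1-x)^{-1/2}$, whose coefficients are exactly $\binom{2n}{n}/4^n$, establishing \cref{tdef}. The main obstacle will be the careful bookkeeping in the recurrence step: correctly reading off the two Kronecker deltas in both parities, and tracking the \emph{decreasing index} in the first argument list of the $k_1=0$ rule. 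Once those are pinned down, the rest is mechanical.
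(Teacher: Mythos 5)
Your proposal is correct and follows essentially the same route as the paper: both apply the $k_1=0$ and $k_1=1$ rules of \cref{cRR} to $(0,2)^n$ to obtain the recurrence $t(n)=\tfrac12\sum_{j=1}^{n}t(j-1)t(n-j)-\tfrac12\sum_{j=1}^{n-1}t(j)t(n-j)$ and then solve it with the ordinary generating function, arriving at $(1-x)T(x)^2=1$ and hence $T(x)=(1-x)^{-1/2}$. The only cosmetic difference is that the paper first rearranges the recurrence into the statement that the convolution $\sum_{m=0}^{n}t(m)t(n-m)$ is constant equal to $1$ before passing to generating functions, whereas you derive the functional equation directly; the content is identical.
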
 
\begin{proof}
The generalised binomial coefficient is defined in the usual way
\begin{equation}
\binom rn=\frac{r(r-1)\dots(r-n+1)}{n!}.
\end{equation}

We use the $k_1=0$ and $k_1=1$ rules of \cref{cRR} to derive a recurrence relation for $t(n)$,
\begin{align}\label{tRR}
t(n) &\overset{k_1=0}{=} \frac12\left[ t(0)c\!\left(1,(0,2)^{n-1}\right)-t(1)t(n-1)+c(1,0,2)c\!\left(1,(0,2)^{n-2}\right)-t(2)t(n-2)+\dots\right.\nonumber\\
&\phantom{\overset{k_1=0}{=}}\left.+c\!\left(1,(0,2)^{n-1}\right)c(1)\right]\nonumber\\
&\overset{k_1=1}{=}\frac12\left[t(0)t(n-1)-t(1)t(n-1)+t(1)t(n-2)-t(2)t(n-2)+\dots+t(n-1)t(0)\right]\nonumber\\
&\overset{\phantom{k_1=0}}{=}t(0)t(n-1)-t(1)t(n-1)+t(1)t(n-2)-t(2)t(n-2)+\dots+\frac{(-1)^{n-1}}2t\!\left(\floor*{\frac n 2}\right)^2\nonumber\\
&\overset{\phantom{k_1=0}}{=}\sum_{m=0}^{n-1}(-1)^m t\!\left(\ceil*{\frac m2}\right)t\!\left(n-1-\floor*{\frac m2}\right)2^{-\delta_{m,n-1}}\\
&\overset{\phantom{k_1=0}}{=}\frac12\sum_{m=0}^{2(n-1)}(-1)^m t\!\left(\ceil*{\frac m2}\right)t\!\left(n-1-\floor*{\frac m2}\right)\nonumber\\
&\overset{\phantom{k_1=0}}{=}\frac12\sum_{m=0}^{n-1}t(m)t(n-1-m)-\frac12\sum_{m=1}^{n-1}t(m)t(n-m).\nonumber
\end{align}
Here $\floor{\cdot}$, $\ceil{\cdot}$ are floor and ceiling function, respectively, rounding to the nearest integer lesser/greater than the argument.
The summands are symmetric under reversing the order of summation, so for all but one term the factor $1/2$ cancels.
But we find it convenient to instead keep all terms and group them by even and odd indices, as done in the last line of \cref{tRR}.
Then by bringing the latter sum to the left hand side, which we can also write as $t(n)=\frac12t(n)t(0)+\frac12t(0)t(n)$,  and multiplying by $2$, we can rewrite \cref{tRR} as
\begin{equation}\label{tconv}
\sum_{m=0}^n t(m)t(n-m)=\sum_{m=0}^{n-1} t(m)t(n-1-m)=t(0)^2=1,
\end{equation}
i.e. we find that the sum is independent of $n$, so we can set e.g. $n=1$ to evaluate it.

To complete the proof constructively,\footnote{Alternatively, we could now confirm that \cref{tdef} satisfies \cref{tconv}, with the uniqueness of the solution being guaranteed from the recursive construction of $c$.} consider that \cref{tconv} has the form of a discrete convolution, so we can restate it in terms of the (ordinary) generating function of $t$,
\begin{equation}
g(x)=\sum_{n=0}^\infty t(n)x^n,
\end{equation}
as
\begin{equation}
g(x)^2=\sum_{n=0}^\infty x^n=\frac1{1-x}
\end{equation}
with a geometric series, so we can express the generating function as a binomial series to determine $t$,
\begin{equation}
g(x)=\frac1{\sqrt{1-x}}=\sum_{n=0}^\infty\binom{-\frac12\,}n(-x)^n=\sum_{n=0}^\infty \binom{n-\frac12}nx^n
\end{equation}
which gives \cref{tdef}.
\end{proof}
Note that $t(n)$ decreases only slowly with $n$; asymptotically, $t(n)\sim 1/\sqrt{\pi n}$.

The solution for $e$ is slightly more complicated as, in contrast to $c$, it does not depend solely on the diagram's topology w.r.t. the main diagonal, but also w.r.t. the \emph{upper diagonal}, which is defined as the diagonal line one unit higher than the main diagonal, as illustrated in Fig.~\ref{cross}.
\begin{definition}[Crossing Numbers]
We say a Bloch sequence $(k_1,\dots,k_n)$ has crossing numbers $N_1,n_1,N_2,n_2,\dots,N_m,n_m$ if its associated diagram crosses, in order, above the upper diagonal $N_1$ times, below the main diagonal $n_1$ times, then above the upper diagonal $N_2$ times, etc.
Here $m$ is some integer with $1\le m\le n/3+1$.
For concreteness there is always an even number of crossing numbers, where the first and last one, $N_1$ and $n_m$, may be $0$ while all other ones are strictly positive integers such that $m$ is well-defined.

Given a Bloch sequence $(k_1,\dots,k_n)$, the crossing numbers can be constructed as follows:
\begin{equation}\label{crossalg}
\hbox{\begin{lstlisting}[mathescape=true,language=C]
$m=1$; $N_m=0$; $n_m=0$; $x=1$;
for $i=1$, $i\le n$, $i++$
    if $\displaystyle\sum_{j=1}^ik_j>i$ $\land$ $\displaystyle\sum_{j=1}^{i-1}k_j\le i-1$
        if $x\ne1$
            $m++$; $N_m=0$; $n_m=0$; $x=1$;
        $N_m++$;
    else if $\displaystyle\sum_{j=1}^ik_j<i$ $\land$ $\displaystyle\sum_{j=1}^{i-1}k_j\ge i-1$
        $n_m++$; $x=0$;
\end{lstlisting}}
\end{equation}
\end{definition}
A canonical diagram that has crossing numbers $N_1,\dots,n_m$ is \begin{equation}
(k_1,\dots,k_n)=((2,0)^{N_1},(0,2)^{n_1-1},0,3,0,(2,0)^{N_2-1},(0,2)^{n_2-1},0,3,0,\dots,(2,0)^{N_m-1},(0,2)^{n_m})
\end{equation}
some examples of which are given in \cref{cross}.
For the special case that the crossing numbers are $N_1,n_1=0,0$, this sequence is empty, and we can instead take $(k_1)=(1)$ as this canonical diagram.
\begin{figure}
\begin{tikzpicture}[scale=0.5]
\draw (0,0) \diagram{{2,0,0,2,0,2,0,3,0}} (0,1)--(8,9);
\draw	(-1,3) node[anchor=south west]{$N_1=1$}
		(6,9) node[anchor=north east]{$N_2=1$}
		(4,2) node[anchor=south west]{$n_1=3$}
		(9,8) node[anchor=north west]{$n_2=0$};
\draw[xshift=16cm,yshift=0.5cm] (0,0) \diagram{{0,2,0,2,0,2,0,2}} (0,1)--(7,8) (-1,3) node[anchor=south west]{$N_1=0$} (2,0) node[anchor=south west]{$n_1=4$};
\end{tikzpicture}
\caption{Illustration of crossing numbers. The diagram $(2,0,0,2,0,2,0,3,0)$ is the lowest order diagram with crossing numbers $1,3,1,0$. The diagram $(0,2)^n$ from \cref{deft} is the lowest order diagram with crossing numbers $0,n$, shown here with $n=4$. Both the (main) diagonal and the upper diagonal are illustrated here. \label{cross}}
\end{figure}

The upper limit $m\le n/3+1$ is derived by setting $N_1=n_m=0$, and all other $N_i=n_i=1$ in the lowest order diagram.
Such a large $m$ is somewhat of an outlier though.
If we consider a Bloch diagram (rotated by $-\pi/4$) as a random bridge, we could develop the notion of a \emph{typical} diagram.
The number of times an $n$th order diagram touches or intersects the diagonal, which is an upper bound on $m$, asymptotically follows a Rayleigh distribution with mean $\sqrt{\pi n}$ \cite[p.708]{Flajolet09}.
This seems to indicate that typically $m\sim\sqrt n$, i.e. in most instances $m\ll n$.

We will now proceed to the main result of the paper, \cref{mainthm}, in which explicit formulas for $c$ and $e$ are obtained.
We first briefly review the heuristics that led us to the formulation of this theorem.
We noted that if we assumed that $c$ had the crossing property, \cref{tdefl} would be sufficient to calculate $c$ for all diagrams.
By calculating a number of examples, we made observations about the structure of the solution, noting the dependence on the crossing numbers only, and used these to allow further simplification of the recurrence relations.
We came to an ansatz for solving the coupled recurrence relations, guided by the observation that our solution for $e$ has to be consistent with $c$ having the crossing property.

In the end, the ansatz is proved in the following theorem by induction, accompanied by a straightforward algebraic analysis:

\begin{theorem}\label{mainthm}
Let $(k_1,\dots,k_n)$ be a Bloch sequence with crossing numbers $N_1,n_1,\dots,N_m,n_m$.
The functions $c$ and $e$ defined in \cref{ceform} are
\begin{align}
c\!\left(k_1,\dots,k_n\right)&=t\!\left(\sum_{i=1}^m n_i\right),\label{cres}\\
e\!\left(k_1,\dots,k_n\right)&=\sum_{i=1}^m \left[t\!\left(\sum_{l=1}^i N_l\right)-t\!\left(\sum_{l=1}^{i-1} N_l\right)+\delta_{i,1}\right]t\!\left(\sum_{j=i}^m n_j\right),\label{eres}
\end{align}
with $t(x)=\binom{2x}x 2^{-2x}$ as given in \cref{tdef}, i.e. $c$ has the crossing property and $e$ is a function of the crossing numbers only.
\end{theorem}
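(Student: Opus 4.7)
The plan is to proceed by strong induction on $n$, verifying that the closed forms \cref{cres,eres} satisfy the recurrences of \cref{ceRR}. The base case $n=1$ is immediate: the sequence $(1)$ has crossing numbers $N_1=1$, $n_1=0$, giving $c(1)=t(0)=1$ and $e(\emptyset)=[t(0)-t(0)+1]\,t(0)=1$, consistent with \cref{ce0}.

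For the inductive step on $c$, I would treat each of the three cases of \cref{cRR} separately. The $k_1=1$ case is trivial: deleting the leading unit step effectively translates the diagonal, preserving all crossing numbers, so $c(1,k_2,\dots,k_n)=c(k_2,\dots,k_n)=t(\sum_i n_i)$ by the inductive hypothesis. The $k_1>1$ case is more subtle: the $\delta$-factors force the sum to range only over those $m$ at which the diagram first returns to the upper diagonal, i.e.\ the $N_1$ splits demarcating the initial above-diagonal block. After substituting the inductive formulas for $c$ and $e$, the bracketed differences in \cref{eres} telescope and one is left with $t(\sum_i n_i)$.

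The $k_1=0$ case is the main technical obstacle. Here the primed sum ranges over all off-diagonal splits of the diagram into a reversed left sub-diagram and a right sub-diagram, with the $\delta$-factors excluding splits lying exactly on the diagonal. Crucially, reflecting a portion of a Bloch diagram preserves the count of below-diagonal excursions, so the inductive hypothesis turns each term into a product $\tfrac12\,t(\cdot)\,t(\cdot)$. Grouping the splits according to which below-diagonal valley of the parent they lie in, and applying the convolution identity \cref{tconv} block-by-block, collapses the double sum to $t(\sum_i n_i)$.

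For $e$, the strategy is to substitute the inductive formulas into \cref{eRR}. The leading $c(k_1,\dots,k_{n-1})$ term, which does not participate in any splitting, produces the $\delta_{i,1}$ contribution in \cref{eres}. The sum over $m$ redistributes the above-diagonal blocks; after reorganising it by the index $i$ of the block in which the split occurs, the bracketed differences $t(\sum_{l=1}^{i}N_l)-t(\sum_{l=1}^{i-1}N_l)$ emerge via repeated use of \cref{tconv}. The principal difficulty throughout is the combinatorial bookkeeping relating the crossing numbers of sub-diagrams produced by the recurrence's splits (including the reflection in the $k_1=0$ case) to those of the parent diagram; once this correspondence is firmly established, the remainder reduces to algebraic manipulation of $t$-values via \cref{tconv}.
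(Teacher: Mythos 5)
Your overall strategy---strong induction, verifying the closed forms against the recurrences of \cref{ceRR} and collapsing everything with the convolution identity \cref{tconv}---is exactly the route the paper takes, and your case split ($k_1=0$, $k_1=1$, $k_1>1$, plus the $e$-recurrence) mirrors the paper's proof. However, as written the proposal has concrete gaps. First, the base case is internally inconsistent: the diagram $(1)$ never rises strictly above the upper diagonal, so its crossing numbers are $N_1=n_1=0$, not $N_1=1$, $n_1=0$ as you claim. With $N_1=1$, \cref{eres} would give $e=t(1)-t(0)+1=\tfrac12\neq1$, contradicting \cref{ce0}; your arithmetic silently uses $N_1=0$, which is the correct value, but you should fix the stated crossing numbers.

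Second, and more seriously, the entire technical content of the theorem lives in the ``combinatorial bookkeeping'' you defer, and the one piece of it you do state is wrong. Reversing (rotating by $\pi$) the left portion of a diagram does \emph{not} preserve the count of below-diagonal excursions: for $k_1=0$ the diagram meets the main diagonal $2b_1-1$ times (where $b_1=\sum_i n_i$), and at the $i$-th \emph{vertical} intersection the reversed left piece has $c$-value $t(i-1)$ while the right piece has $t(b_1-i)$, whereas at the $i$-th \emph{horizontal} intersection the contribution is $-t(i)\,t(b_1-i)$. The off-by-one between $i$ and $i-1$ (the initial dip at the origin does not survive the reversal) is precisely what makes the two convolution sums in the paper's computation differ by the boundary terms that reduce, via \cref{tconv}, to $t(b_1)$; with your stated correspondence the cancellation would not come out. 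Relatedly, the $\delta$-factors in the $k_1=0$ rule of \cref{cRR} do not merely ``exclude splits lying exactly on the diagonal'': they assign weight $-\tfrac12$ to horizontal intersections and $0$ to corner touches, and that sign is essential. Similar precise statements are needed (and omitted) for how the crossing numbers of the sub-diagrams in the $k_1>1$ and $e$ recurrences relate to the parent's $N_i$, $n_i$ before the claimed telescoping can be carried out. The plan is the right one, but until these correspondences are proved rather than asserted, the induction does not close.
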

\begin{proof}
We verify that \cref{cres,eres} are consistent with \cref{ceRR}.
At $n=1$, we have $k_1=1$ with $N_1=n_1=0$, which are also the crossing numbers for an empty diagram ($\emptyset$, $n=0$).
Then \cref{cres,eres} give $c(1)=e(1)=e(\emptyset)=1$, consistent with \cref{ce0}.

Suppose \cref{mainthm} holds for all diagrams of degree less than $n$. (Note that degree simply refers to the number of entries in the Bloch sequence $\{k\}_n$.)
We apply \cref{eRR} to compute $e(k_1,\dots,k_n)$ and show it is consistent with \cref{eres}:
\begin{subequations}\label{sproof}
\begin{align}
e\!\left(k_1,\dots,k_n\right)&=t\!\left(\sum_{j=1}^m n_j\right)-\sum_{i=1}^m\sum_{k=1}^{N_i}t\!\left(\sum_{j=1}^{i-1}N_j+k\right)\nonumber\\&\phantom{=}\times\sum_{l=i}^m\left[t\!\left(\sum_{h=i}^l N_h-k\right)-\left(1-\delta_{l,i}\right)t\!\left(\sum_{h=i}^{l-1} N_h-k\right)\right]t\!\left(\sum_{g=l}^m n_g\right)\\
&=\label{s1}t\!\left(b_1\right)-\sum_{i=1}^m\sum_{k=1+a_{i-1}}^{a_i}t(k)\sum_{l=i}^m \left[t\!\left(a_l-k\right)-\left(1-\delta_{l,i}\right)t\!\left(a_{l-1}-k\right)\right]t\!\left(b_l\right)\\
&=\label{s2}t\!\left(b_1\right)-\sum_{l=1}^m t\!\left(b_l\right)\sum_{i=1}^l\sum_{k=1+a_{i-1}}^{a_i}t(k) \left[t\!\left(a_l-k\right)-\left(1-\delta_{l,i}\right)t\!\left(a_{l-1}-k\right)\right]\\
&=\label{s3}t\!\left(b_1\right)-\sum_{l=1}^m t\!\left(b_l\right) \left[\sum_{k=1}^{a_l} t(k)t\!\left(a_l-k\right) - \sum_{k=1}^{a_{l-1}}t(k)t\!\left(a_{l-1}-k\right)\right]\\
&=\label{s4}\sum_{l=1}^m t\!\left(b_l\right)\left[t\!\left(a_l\right)-t\!\left(a_{l-1}\right)+\delta_{l,1}\right]
\end{align}
\end{subequations}
where in \cref{s1} we introduce $a_i=\sum_{j=1}^i N_j$ and $b_i=\sum_{j=i}^m n_j$ to simplify notation, and shift the summation index $k$ by $a_{i-1}$.
Then in \cref{s2} we switch the sums over $i$ and $l$.
Note that $a_0=0$, $a_1=N_1\ge0$, and $a_{i+1}>a_i$ for $i>0$.
So in \cref{s3} we can combine the double sum over $i,k$ into one over $k$.
And finally in \cref{s4} we add and subtract the $k=0$ terms, then use \cref{tconv} and get \cref{eres}.

Similarly, we calculate $c(k_1,\dots,k_n)$ using \cref{cRR}
\begin{subequations}
\begin{align}
&\phantom{=}c\!\left(k_1,\dots,k_n\right)\nonumber\\&=\delta_{0,k_1}\frac12 \left[\sum_{i=1}^{b_1}t(i-1)t\!\left(b_1-i\right)-\sum_{i=1}^{b_1-1} t(i)t\!\left(b_1-i\right)\right]+\delta_{1,k_1}t\!\left(b_1\right)\label{c1}\\
&\phantom{=}+\left(1-\delta_{0,k_1}-\delta_{1,k_1}\right)\sum_{i=1}^m\sum_{j=1}^{N_i}t\!\left(a_{i-1}+j-1\right)s\!\left(N_i-j,n_i,\dots,N_m,n_m\right)\nonumber\\
&=\delta_{0,k_1}\frac12 \left[\sum_{i=0}^{b_1-1}t(i)t\!\left(b_1-1-i\right)-\left(1-2t\!\left(b_1\right)\right)\right]+\delta_{1,k_1}t\!\left(b_1\right)\label{c2}
+\left(1-\delta_{0,k_1}-\delta_{1,k_1}\right)\\&\phantom{=}\times\sum_{i=1}^m\sum_{j=1}^{N_i}t\!\left(a_{i-1}+j-1\right)\sum_{k=i}^m t\!\left(b_k\right)\left[t\!\left(\sum_{h=i}^k N_h-j\right)-\left(1-\delta_{k,i}\right)t\!\left(\sum_{h=i}^{k-1} N_h-j\right)\right]\nonumber\\
&=\left(\delta_{0,k_1}+\delta_{1,k_1}\right)t\!\left(b_1\right)\label{c3}\\ &\phantom{=}+\left(1-\delta_{0,k_1}-\delta_{1,k_1}\right)\sum_{k=1}^m t\!\left(b_k\right)\sum_{i=1}^k\sum_{j=a_{i-1}}^{a_i-1}t(j)\left[t\!\left(a_k-1-j\right)-\left(1-\delta_{k,i}\right)t\!\left(a_{k-1}-1-j\right)\right]\nonumber\\
&=\left(\delta_{0,k_1}+\delta_{1,k_1}\right)t\!\left(b_1\right)\label{c4}\\ &\phantom{=}+\left(1-\delta_{0,k_1}-\delta_{1,k_1}\right)\sum_{k=1}^m t\!\left(b_k\right)\left[\sum_{j=0}^{a_k-1} t(j)t\!\left(a_k-1-j\right)-\sum_{j=0}^{a_{k-1}-1}t(j)t\!\left(a_{k-1}-1-j\right)\right]\nonumber\\
&=\left(\delta_{0,k_1}+\delta_{1,k_1}\right)t\!\left(b_1\right)\label{c5}+\left(1-\delta_{0,k_1}-\delta_{1,k_1}\right)\sum_{k=1}^m t\!\left(b_k\right)\left[1-\left(1-\delta_{k,1}\right)\right]\\
&=t\!\left(b_1\right).
\end{align}
\end{subequations}
In \cref{c1} note that for $k_1=0$ the diagram will have $2b_1-1$ intersections with the main diagonal, $b_1-1$ of which are horizontal and thus come with a negative sign. (This rule is a consequence of the negative sign in Eq.~(\ref{cRR}), see Sec.~\ref{dRR}, Fig.~\ref{dRRe}.)
Then in \cref{c2} we shift the index of the first sum, as well as add and subtract the $i=0,b_1$ terms to the second sum and immediately evaluate it with \cref{tconv}, which we then also apply to the first sum in the following step.
In \cref{c3} we index-shift the $j$ sum by $a_{i-1}-1$, and switch the $k$ and $i$ sums.
In \cref{c5} we use \cref{tconv} again, but we have to be careful not to apply it if the sums vanish because of \cref{nullsum}.
Since we are in the $k_1>1$ term, we know that $a_1=N_1\ge1$, and $a_{i+1}>a_i$ for $i\ge1$ still holds, so the only term \cref{nullsum} applies to is the second $j$ sum for $k=1$ since $a_0=0$.
\end{proof}

\section{On non-uniqueness of diagrammatic representations\label{uniqueness}}
By stating a recurrence relation and initial conditions we uniquely define a quantity.
For example, combined with the initial conditions, \cref{energydef,vectordef} fix the eigenenergy and eigenvector corrections, and \cref{eRR,cRR} uniquely define the coefficients $c$ and $e$.
That does not mean, however, that $c$ and $e$ are necessarily the unique solutions of \cref{energydef,vectordef}, or that \cref{energydef,vectordef} are the unique solutions of \cref{RSPTdef}.

Since we are considering the nondegenerate case, eigenvectors are determined up to a factor.
We are fixing the normalisation with \cref{norm}, but that still leaves a phase freedom.
The zeroth order phase is set by our choice of $\kl0$.
We can modify this phase in higher orders of $\epsilon$ by adding an imaginary part to \cref{zeron}.
An arbitrary imaginary part would generally change the structure of \cref{cdef}, but we could preserve it, e.g. by setting
\begin{equation}\label{altphase}
\bk0n\rightarrow\begin{cases} -\sum_{m=1}^{(n-1)/2}\bk m{n-m} & \text{if }n\text{ is odd}, \\
-\sum_{m=1}^{n/2-1} \bk m{n-m} -\frac12 \bk{n/2}{n/2} & \text{if }n\text{ is even}.\end{cases}
\end{equation}
This reduces the number of diagrams but comes at the cost of a more complicated rule requiring an even/odd distinction.

Note that if the Hamiltonian is real-symmetric, \cref{zeron,altphase} are equivalent, equal, yet \cref{altphase} still provides the more compact description in terms of the number of diagrams.
This brings us to the main point of this section: Once norm and phase are fixed, the eigenvector is uniquely defined, but the representation in terms of diagrams is not.
The eigenenergy is of course independent of the factor in front of the eigenvector but has a similar freedom with regard to the decomposition into diagrams.

\begin{definition}
A Bloch sequence $(k_1,\dots,k_n)$ containing $q-1$ zeroes, $q=1,\dots,n$, can be represented equivalently by $q$ strings of positive integers $z_i\in\mathbb{N}^k$, $k=0,\dots,n-q+1$ (note that null strings, $k=0$, are allowed).
$\mathcal{Z}$ is defined as the mapping between a Bloch sequence and the set of $z_i$ strings:
\begin{equation}
\mathcal{Z}:\left(k_1,\dots,k_n\right)\mapsto\left(z_1,\dots,z_q\right)\quad:\quad\left(k_1,\dots,k_n\right)=\left(z_1,0,z_2,\dots,0,z_q\right).
\end{equation}
We also define the operators $T$, $L$, and $D$ (``total", ``length", and ``difference") applied to string $z$:
\begin{equation}
\text{for }z=j_1j_2\dots j_k:\quad T(z)=\sum_{i=1}^k j_i,\quad L(z)=k,\quad D(z)=T(z)-L(z).
\end{equation}
\end{definition}
As an example for the map $\mathcal{Z}$, we can write
\begin{equation}
\mathcal{Z}:\left(1,3,0,0,1\right)\mapsto\left(\{1,3\},\O,\{1\}\right).   
\end{equation}
Here $n=5$, $q=3$, and we see the appearance of integer strings $z_i$ of varying length, including the null string.

\begin{theorem}\label{permutations}
In the expansion of the eigenenergy, \cref{edef}, all diagrams that differ only by a permutation of z strings correspond to the same matrix element.

Similarly, in the expansion of the eigenvector, \cref{cdef}, diagrams that share the same first string and otherwise differ only by a permutation of the remaining strings correspond to the same matrix element.
\label{th4}
\end{theorem}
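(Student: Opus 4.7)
The plan is to exploit the identity $S^0=-\kl 0 \bl 0$ to factorise each matrix element along the $z$-string decomposition of the Bloch sequence. The key observation is that every interior $S^0$ appearing in the operator string $\svl k 1 2 n$ inserts a rank-one projector onto $\kl 0$, cleaving the product into independent pieces --- all but (at most) one of which are scalars.

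Concretely, I would first introduce auxiliary notation: for a string $z=(j_1,\dots,j_{L(z)})$ of positive integers, let $W(z)=S^{j_1}V\cdots S^{j_{L(z)}}V$ (with $W(\emptyset)=I$) and $f(z)=\bl 0 V W(z)\kl 0$, a scalar depending only on $z$. For a Bloch sequence with $\mathcal{Z}:(k_1,\dots,k_n)\mapsto(z_1,\dots,z_q)$, the corresponding operator decomposes as $W(z_1)\cdot S^0 V\cdot W(z_2)\cdot S^0 V\cdots S^0 V\cdot W(z_q)$. Substituting $S^0=-\kl 0 \bl 0$ at each of the $q-1$ interior positions and applying the result to $\kl 0$ should then yield the factorisation
\begin{equation*}
\svl k 1 2 n \kl 0 \;=\; (-1)^{q-1}\,[W(z_1)\kl 0]\,\prod_{i=2}^{q} f(z_i).
\end{equation*}

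For the eigenvector expansion \cref{cdef}, the right-hand side is a $z_1$-dependent ket multiplied by a commutative product of scalars, so it is manifestly invariant under any permutation of $z_2,\dots,z_q$, which delivers the eigenvector claim. For the eigenenergy expansion \cref{edef}, the additional $\bl 0 V$ prefix turns $W(z_1)\kl 0$ into $f(z_1)$, leaving a pure scalar product $(-1)^{q-1}\prod_{i=1}^{q}f(z_i)$ that is symmetric under permutations of all $q$ strings. The step I expect to require the most care is the treatment of boundary cases where $z_1$ or $z_q$ is the null string, i.e.\ where the Bloch sequence begins or ends with a $0$ so that the corresponding $S^0$ lacks a neighbouring $V$ on one side; the conventions $W(\emptyset)=I$ and $f(\emptyset)=\bl 0 V\kl 0$ should absorb these cases uniformly, but I would verify the sign and factor count explicitly in those corners before declaring the argument complete.
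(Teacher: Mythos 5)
Your proposal is correct and rests on exactly the same mechanism as the paper's proof, namely substituting $S^0=-\kl0\bl0$ at the zeros of the Bloch sequence and using that the resulting factors $\bl0V\cdots\kl0$ are commuting scalars. The only difference is presentational: you carry the factorisation through all $q-1$ interior zeros at once, so that the permutation symmetry (of all strings for the energy, of $z_2,\dots,z_q$ for the vector) is manifest, whereas the paper performs the substitution at one or two zeros at a time and composes the resulting block moves into general permutations.
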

\begin{proof}
Suppose the $m$th component of a Bloch sequence $(k_1,\dots,k_n)$ vanishes, $k_m=0$.
The term this sequence contributes to the energy correction is
\begin{align}
\begin{split}\label{permenergy}
&\bl0V\sv k1{m-1}S^0V\sv k{m+1}n\kl0\\
&=-\bl0V\sv k1{m-1}\kl0\bl0V\sv k{m+1}n\kl0\\
&=-\bl0V\sv k{m+1}n\kl0\bl0V\sv k1{m-1}\kl0\\
&=\bl0V\sv k{m+1}nS^0V\sv k1{m-1}\kl0.
\end{split}
\end{align}
Suppose $k_m$ is the $M$th zero in the Bloch sequence, and $\mathcal{Z}(k_1,\dots,k_n)=(z_1,\dots,z_q)$.
\Cref{permenergy} implies that $(z_1,\dots,z_M,z_{M+1},\dots,z_q)$ has the same operator content as $(z_{M+1},\dots,z_q,z_1,\dots,z_M)$, i.e. the operator content is invariant under cyclical permutation of strings.
Now, let $k_j$ be the $J$th zero, w.l.o.g. assume $j>m$
\begin{align}
&\bl0V\sv k1{m-1}S^0V\sv k{m+1}{j-1}S^0V\sv k{j+1}n\kl0\nonumber\\
&=(-1)^2\bl0V\sv k1{m-1}\kl0\bl0V\sv k{m+1}{j-1}\kl0\bl0V\sv k{j+1}n\kl0\nonumber\\
&=(-1)^2\bl0V\sv k{m+1}{j-1}\kl0\bl0V\sv k1{m-1}\kl0\bl0V\sv k{j+1}n\kl0\nonumber\\
&=\bl0V\sv k{m+1}{j-1}S^0V\sv k1{m-1}S^0V\sv k{j+1}n\kl0,
\end{align}
i.e. $(z_1,\dots,z_q)$ has the same operator content as $(z_{M+1},\dots,z_J,z_1,\dots,z_M,z_{J+1},\dots,z_q)$.
For example, by setting $M=1$ we can permute the first string $z_1$ to the $J$th position without changing the order of the other strings.
From this we can compose all permutations.

For the eigenvector, the calculation works out analogously with the only difference that here the operator content does not start with a $\bl0V$, so we can never permute $z_1$.
The rest of the diagram $(z_2,\dots,z_q)$ has the same structure\footnote{Though remember that unless $z_1=1$, it is no longer equivalent to a Bloch sequence, so not a diagram by itself, which does not impact the permutation of matrix elements of course.} as a term in the energy expansion, so the same permutation rules apply.
\end{proof}

Part of the redundancy identified by Theorem \ref{th4} already appears when we define the recurrence relations for $c$ and $e$.
For example, in the last term of \cref{energydef}, we can switch the order to $\lambda_m\bk0{n-m}$, which would change \cref{eder} and would lead to a different recurrence relation for $e$ and thus different values for $c$ and $e$.

We could also compare to the Bloch style result for the energy, \cref{energybloch}, which is a sum over all convex diagrams, and note that in our language, a convex diagram is described as $N_1,0$ with $e=t(N_1)>0$, but there are also many non-convex diagrams for which $e\ne0$.
So our result for the energy is less efficient.
But even when restricting to convex diagrams, \cref{permutations} still leads to a lot of redundancy.
Salzman \cite{Salzman68} addresses this for the (unnormalised) eigenvector by separating terms into an operator part (what we would call $z_1$) and a coefficient containing the matrix elements.
The number of different $z_1$'s in an order $n$ convex diagram is $2^n-n$. (Salzman already gave this as a sum, we just confirmed and evaluated it.)
Unfortunately, the rules he gives to list all diagrams are relatively complicated and equivalent coefficients are collected manually.
Silverstone and Holloway \cite{Silverstone70}, again for unnormalised eigenvectors, give a formally minimal result which still requires evaluating many derivatives.

To reduce the number of diagrams in our result down to a minimum, we can sum up $c$ and $e$ for all diagrams that are equivalent by \cref{permutations} and only keep one representative diagram.
For example, we can declare an ordering on strings and choose as representative diagram the one where strings are ordered descending.
\begin{definition}[Ordering of strings]
Let $y=(k_1,\dots,k_n)\ne z=(j_1,\dots,j_m)$ be strings of positive integers. We say $y>z$
\begin{gather}
\text{if }D(y)>D(z),\nonumber\\
\text{else if }L(y)>L(z),\nonumber\\
\text{else if }k_1>j_1,\\
\text{else if }k_2>j_2,\nonumber\\
\vdots\nonumber
\end{gather}

The canonical representative of a permutation group of strings has $z_1\ge z_2\ge\dots\ge z_q$.
\end{definition}
Giving the sum over all $c$ or $e$ for an arbitrary representative diagram is generally not an easy task.
Of course, given a string representation $z_1^{m_1}\dots z_k^{m_k}$ ($k$ distinct strings $z_i$ with multiplicity $m_i$), we can write down all the $(\sum_{i=1}^km_i)!/\prod_{i=1}^km_i!$ permutations, calculate their $c$ and $e$ and sum them up to get a $c_\text{eff}$ and $e_\text{eff}$.
The difficulty lies in automating this, i.e. listing only the canonical diagrams and finding an explicit function on them that gives $c_\text{eff}$ and $e_\text{eff}$ directly, preferably without having to invoke \cref{crossalg} for the whole permutation group.
This is less of a concern for the energy where we can alternatively start from \cref{energybloch}.
Then the problem becomes counting all the convex permutations, a nested sum for which can be written down but perhaps cannot be evaluated explicitly without specifying a Bloch sequence first.

\subsection{Number of terms}
We take a look at how many diagrams are generated by our method and other previous methods, and how many of them may correspond to distinct operator expressions.
This subsection is summarised in \cref{termcounts}.

At order $n$, there are $\binom{2n-1}{n}$ distinct Bloch sequences~\cite{Bloch58}.
This can easily be seen by considering that to construct all diagrams we have to list all distinct arrangements of $n$ unit vertical steps and $n-1$ unit horizontal steps (the $n$-th horizontal step is always fixed at the end).
This is the number of terms in our perturbation expansion for $\kl n$ and $\lambda_{n+1}$ (though $e$ can be $0$).
If we apply \cref{permutations}, it becomes an upper bound for the number of canonically ordered diagrams, i.e. the minimum number of terms required to cover all distinct operators.
Asymptotically it scales as $4^n/2\sqrt{\pi n}$.

From Bloch~\cite{Bloch58} we know that convex diagrams are sufficient for the expansion of the energy (or the unnormalised vector). The number of these diagrams for order $n$ is simply the Catalan numbers $C_n=(2n)!/n!(n+1)!=\frac2{n+1}\binom{2n-1}{n}$~\cite{Bloch58};\cite[p.76]{Flajolet09}.
Asymptotically $C_n\sim4^n/\sqrt{\pi n^3}$ \cite[p.7]{Flajolet09}, i.e. the exponential scaling is the same, only the algebraic pre-factor is improved.

A lower bound for the minimum number of diagrams is the number of partitions of $n$ into positive integers, cf.~\cite{Silverstone70}.
There is no known explicit expression for this partition function, but it has a generating function, recurrence relations, and an asymptotic expression $\exp(\pi\sqrt{ 2n/3})/4\sqrt3n=4^{\pi\ln{4}\sqrt{2n/3}}/4\sqrt3n$ \cite[p.41]{Flajolet09}.

As stated above, Salzman~\cite{Salzman68} grouped diagrams by $z_1$ (the string of positive integers before the first $0$ in the Bloch sequence) and counted $2^n-n$ distinct groups within convex diagrams of length $n$.
We can view this as a lower bound on the number of terms in the unnormalised eigenvector correction $\left|\overline{\lambda_n}\right\rangle$, since $z_1$ cannot be permuted with the other strings without changing the operator content.
By adding the number of $z_1$'s leading to a non-convex diagram, we can generalise this to a lower bound for the number of terms in $\kl n$: $2^n-1$.
For the energy the situation is slightly more complicated as diagrams with a distinct $z_1$ can still be equivalent.
For convex diagram this becomes relevant at $n\ge5$, which is why it does not appear in Figs.~\ref{diagrams123} or \ref{diagrams4}, e.g. $(3,0,2,0,0)$ is equivalent to $(2,0,3,0,0)$.
Yet, any of the $2^n-n$ $z_1$'s that can start a convex diagram can be the greatest string of a canonically ordered diagram, so this lower bound also applies to $\lambda_{n+1}$ after all.

\begin{table}
\begin{tabular}{|m{1cm}||m{1.5cm}|m{1.5cm}|m{1.5cm}|m{1.5cm}|m{0.4cm}m{1.1cm}|m{0.4cm}m{1.1cm}|m{1.5cm}|m{1.5cm}||m{1.5cm}|}
\hline
order & here/all & \multicolumn{2}{l|}{Kato~\cite{Kato49}} & Bloch~\cite{Bloch58} & \multicolumn{4}{l|}{minimum} & \multicolumn{2}{l||}{min. if $V$ off-diag.} & partition \\
$n$&Bl. seq.& $P_n$ & $\lambda_{n+1}$ &(convex)& \multicolumn{2}{l|}{$\kl n$} & \multicolumn{2}{l|}{$\lambda_{n+1}${\textsuperscript{a}}} & $\kl n$ & $\lambda_{n+1}${\textsuperscript{a}} &function\\
\hline\hline
$1$ & $1$ &$2$&$3$& $1$ &$1$&$(\ge1)$&$1$&$(\ge1)$ &$1$&$1$ & $1$\\
\hline
$2$ & $3$ &$6$&$10$& $2$ &$3$&$(\ge3)$&$2$&$(\ge2)$ &$2$&$1$& $2$\\
\hline
$3$ & $10$ &$20$&$35$& $5$ &$9$&$(\ge7)$&$5$&$(\ge5)$ &$5$&$2$& $3$\\
\hline
$4$ & $35$ &$70$&$126$& $14$ &$26$&$(\ge15)$&$13$&$(\ge12)$ &$12$&$4$& $5$\\
\hline
$n$ & $\binom{2n-1}n$ &$\binom{2n}n$& $\binom{2n+1}{n}$ & $\frac{(2n)!}{n!(n+1)!}$ &\multicolumn{2}{l|}{$\ge2^n-1$}&\multicolumn{2}{l|}{$\ge2^n-n$}&&&recursive\\
\hline
asym. & $\frac{4^n}{2\sqrt{\pi n}}$ &$\frac{4^n}{\sqrt{\pi n}}$&$\frac{4^n2}{\sqrt{\pi n}}$& $\frac{4^n}{\sqrt{\pi n^3}}$ &&&&&&&$\frac{e^{\pi\sqrt{ 2n/3}}}{4\sqrt3n}$\\
\hline
\multicolumn{12}{l}{{\textsuperscript{a}}\footnotesize{ Up to fourth order, $\left|\overline{\lambda_n}\right\rangle$ has the same number of terms, but in higher orders it has more than $\lambda_{n+1}$.}}
\end{tabular}
\caption{Number of terms in the energy correction $\lambda_{n+1}$, the unnormalised eigenvector correction $\left|\overline{\lambda_n}\right\rangle$, the normalised eigenvector correction $\kl n$, and the correction to the projector onto a degenerate eigenspace $P_n$, by order $n$, including the asymptotic behaviour in the last row, where known. The last column shows the number of partitions of $n$ into positive integers, a lower bound for the minimum number of terms we would get when summarising terms according to \cref{permutations}. These minimum numbers are also shown in columns $6$ through $9$, with Salzman-style~\cite{Salzman68} lower bounds in lieu of an explicit expression for general $n$. Where not otherwise specified, the number of terms in the energy and vector corrections is the same, i.e. in column $2$, the number of all Bloch sequences is the number of terms in $\kl n$ and $\lambda_{n+1}$, and in column $5$, the number of convex diagrams gives the number of terms in $\left|\overline{\lambda_n}\right\rangle$ and $\lambda_{n+1}$ following Bloch~\cite{Bloch58}. If $V$ is purely off-diagonal (columns $8$ and $9$), none of the lower bounds apply.\label{termcounts}}
\end{table}
Clearly none of the bounds are tight for sufficiently large $n$, though the latter set of lower bounds show that the minimal number of diagrams scales as $\exp(c n)$ rather than $\exp(c \sqrt n)$.

All these considerations remain independent of the Hamiltonian.
If we include information about the Hamiltonian, further simplifications can be made.
For example, as noted above, if the Hamiltonian is real-symmetric the operator content of a diagram is invariant under reversing the ordering within strings (except for $z_1$ in the eigenvector expansion).
A more generally applicable scenario is a completely off-diagonal (in the unperturbed eigenbasis) perturbation, since this can always be achieved by absorbing the diagonal part of $V$ into $H_0$.
In particular this means that $\bl0V\kl0$ vanishes, so any Bloch sequence that ends (and/or starts) in zero and/or contains two zeroes in succession does not contribute to the eigenvector (energy) expansion.
As already noted by Salzman this greatly reduces the number of necessary diagrams~\cite{Salzman68}.

\section{Practical demonstration of diagrammatics\label{example}}
\subsection{Diagrammatic interpretation of recurrence relations\label{dRR}}
To facilitate a more thorough understanding of \cref{ceRR}, here we recount the recurrence relations diagrammatically.

Broadly speaking, the recurrence relations in \cref{RSPTRR,ceRR} both define how to compute higher order terms from lower order terms.
If we consider them in terms of diagrams, there is a key difference though.
\Cref{RSPTRR} defines how to construct (the sum of) all order $n$ diagrams by combining lower order diagrams.
On the other hand, \cref{ceRR} gives the coefficients of a single order $n$ diagram by deconstructing it into all possible compositions of lower order diagrams. Thus there is an implicit change of approach in going from the proof of \cref{RSPTRR} to \cref{ceRR}.
In the following, the point of view of \cref{ceRR} is illustrated more transparently.
We show at which points (marked with dots in \cref{dRRe,dRR0,dRR1,dRRg1}) diagrams should be cut in two and how.

\begin{figure}[h!]
\[e\left(\,
\begin{tikzpicture}[scale=0.5,baseline={([yshift=-\the\dimexpr\fontdimen22\textfont2\relax] current bounding box.center)},]
\draw (0,1)--(5,6)--(6,6)--(0,0) (2,4)--(4,4);
\draw[dashed] (0,0)..controls (0,1)..(1,1.5)..controls (2,2) and (1,3)..(2,4) (4,4)..controls (4,6)..(5,6);
\draw[fill=black] (3,4) circle[radius=0.1cm];
\end{tikzpicture}
\,\right)=c\left(\,
\begin{tikzpicture}[scale=0.5,baseline={([yshift=-\the\dimexpr\fontdimen22\textfont2\relax] current bounding box.center)},]
\draw (5,6)--(6,6)--(0,0) (2,4)--(4,4);
\draw[dashed] (0,0)..controls (0,1)..(1,1.5)..controls (2,2) and (1,3)..(2,4) (4,4)..controls (4,6)..(5,6);
\draw[white,very thick] (3,3.9)--(3,4.1);
\end{tikzpicture}
\,\right)-\sum_{\substack{\text{horizontal}\\ \text{intersections}}}c\left(\,
\begin{tikzpicture}[scale=0.5,baseline={([yshift=-\the\dimexpr\fontdimen22\textfont2\relax] current bounding box.center)},]
\draw (3,4)--(4,4)--(0,0)--(1,0);
\draw[dashed,xshift=1cm] (0,0)..controls (0,1)..(1,1.5)..controls (2,2) and (1,3)..(2,4);
\begin{pgfinterruptboundingbox}
\clip (0,-1) rectangle (4,5);
\draw[fill=black] (0,0) circle[radius=0.1cm] (4,4) circle[radius=0.1cm];
\end{pgfinterruptboundingbox}
\end{tikzpicture}
\,\right)e\left(\,
\begin{tikzpicture}[scale=0.5,baseline={([yshift=-\the\dimexpr\fontdimen22\textfont2\relax] current bounding box.center)},]
\draw[dashed,xshift=-4cm,yshift=-4cm] (4,4)..controls (4,6)..(5,6);
\draw (1,2)--(2,2)--(0,0);
\end{tikzpicture}
\,\right)\]
\caption{Diagrammatic illustration of the recurrence relation for $e$, \cref{eRR}. The sum is over all horizontal intersections with the upper diagonal, including (if applicable) the one at height $n$, in which case the argument of $e$ on the right-hand side is $\emptyset$. Here and in the following we mark the intersections we sum over and where we cut the diagrams with a dot.\label{dRRe}}
\end{figure}
For $e$, we take $c$ of the same diagram, then for every horizontal intersection with the upper diagonal we subtract a decomposition where we take $c$ of a diagram beginning with a $0$-step followed by everything before the intersection, multiplied by $e$ of the part of the original diagram following the $0$-step after the intersection, see \cref{dRRe}.

\begin{figure}[h]
\[c\left(\,
\begin{tikzpicture}[scale=0.5,baseline={([yshift=-\the\dimexpr\fontdimen22\textfont2\relax] current bounding box.center)},]
\draw[dashed] (1,0)..controls (1,1) and (3,1)..(3,3)..controls (3,5) and (4,4)..(4,5)..controls (4,6)..(5,6);
\draw (5,6)--(6,6)--(0,0)--(1,0);
\draw[fill=black] (3,3) circle[radius=0.1cm];
\end{tikzpicture}
\,\right)=\frac12\sum_{\text{intersections}}(-1)^{\begin{tikzpicture}[scale=0.5,baseline={([yshift=-\the\dimexpr\fontdimen22\textfont2\relax] current bounding box.center)},]
\draw (0,0)--(0.707,0.707);
\draw[dashed] (-0.1465,0.3535)--(0.8535,0.3535);
\draw[white] (0,-0.1)--(0,-0.3);
\end{tikzpicture}}
c\left(\,
\begin{tikzpicture}[scale=0.5,baseline={([yshift=-\the\dimexpr\fontdimen22\textfont2\relax] current bounding box.center)},]
\draw[dashed] (0,0)..controls (0,2) and (2,2)..(2,3);
\draw (2,3)--(3,3)--(0,0);
\begin{pgfinterruptboundingbox}
\clip (-1,0) rectangle (1,1);
\draw[fill=black] (0,0) circle[radius=0.1cm];
\end{pgfinterruptboundingbox}
\end{tikzpicture}
\,\right) c\left(\,
\begin{tikzpicture}[scale=0.5,baseline={([yshift=-\the\dimexpr\fontdimen22\textfont2\relax] current bounding box.center)},]
\draw[dashed,xshift=-3cm,yshift=-3cm] (3,3)..controls (3,5) and (4,4)..(4,5)..controls (4,6)..(5,6);
\draw[xshift=-3cm,yshift=-3cm] (5,6)--(6,6)--(3,3);
\begin{pgfinterruptboundingbox}
\clip (-1,0) rectangle (1,1);
\draw[fill=black] (0,0) circle[radius=0.1cm];
\end{pgfinterruptboundingbox}
\end{tikzpicture}
\,\right)\]
\caption{$k_1=0$ rule of $c$ recurrence, \cref{cRR}. At the intersection with the diagonal the diagram is cut in two. The first part is rotated by $\pi$, i.e. read backwards. The pictorial exponent is to be read as $1$ if the diagram intersects the diagonal horizontally, and $0$ if the intersection is vertical.\label{dRR0}}
\end{figure}
We split up the $c$ recurrence relations, \cref{cRR}, into three parts again.
For $k_1=0$, see \cref{dRR0}, we sum over all intersections with the main diagonal. There is at least one such intersection, since we start below the diagonal and end above it.
The part of the diagram before the intersection is read backwards, or equivalently is rotated by $\pi$.
The part after the intersection is left as is.
We multiply $c$ of both diagram parts and divide by $2$.
Horizontal intersections get a minus sign. (The example in \cref{dRR0} shows a vertical intersection.)

\begin{figure}[h]
\[c\left(\,
\begin{tikzpicture}[scale=0.5,baseline={([yshift=-\the\dimexpr\fontdimen22\textfont2\relax] current bounding box.center)},]
\draw[dashed] (1,1)..controls (1,2)..(2,2)..controls (4,2) and (3,5)..(4,5);
\draw (4,5)--(5,5)--(0,0)--(0,1)--(1,1);
\draw[fill=black] (1,1) circle[radius=0.1cm];
\end{tikzpicture}
\,\right)=c\left(\,
\begin{tikzpicture}[scale=0.5,baseline={([yshift=-\the\dimexpr\fontdimen22\textfont2\relax] current bounding box.center)},]
\draw[dashed,xshift=-1cm,yshift=-1cm] (1,1)..controls (1,2)..(2,2)..controls (4,2) and (3,5)..(4,5);
\draw[xshift=-1cm,yshift=-1cm] (4,5)--(5,5)--(1,1);
\begin{pgfinterruptboundingbox}
\clip (-1,0) rectangle (1,1);
\draw[fill=black] (0,0) circle[radius=0.1cm];
\end{pgfinterruptboundingbox}
\end{tikzpicture}
\,\right)\]
\caption{$k_1=1$ rule of $c$ recurrence, \cref{cRR}. Adding or removing a $k_1=1$ step in front of a diagram (or at any position) does not change its $c$-value.\label{dRR1}}
\end{figure}
The $k_1=1$ rule remains the simplest.
If a diagram starts with a $1$-step, remove it, see \cref{dRR1}.
This easily generalises to: remove all $k_i=1$-steps.
Though we should remember to stop at $n=1$; alternatively one could define $c(\emptyset)=1$, which would effectively make $\kl0$ the starting point instead of $\kl1$.

\begin{figure}[h]
\[c\left(\,
\begin{tikzpicture}[scale=0.5,baseline={([yshift=-\the\dimexpr\fontdimen22\textfont2\relax] current bounding box.center)},]
\draw[dashed] (0,2)..controls (0,3) and (1,2)..(1,3)..controls (1,4) and (2,3)..(2,4) (4,4)..controls (4,6)..(5,6);
\draw (0,2)--(0,0)--(6,6)--(5,6) (0,1)--(5,6) (2,4)--(4,4);
\draw[fill=black] (3,4) circle[radius=0.1cm];
\draw (-0.1,0.4)--(0.1,0.6) (3.4,3.9)--(3.6,4.1);
\end{tikzpicture}
\,\right)=\sum_{\substack{\text{horizontal}\\ \text{intersections}}} c\left(\,
\begin{tikzpicture}[scale=0.5,baseline={([yshift=-\the\dimexpr\fontdimen22\textfont2\relax] current bounding box.center)},]
\draw[dashed,yshift=-1cm] (0,2)..controls (0,3) and (1,2)..(1,3)..controls (1,4) and (2,3)..(2,4);
\draw (0,1)--(0,0)--(3,3)--(2,3);
\begin{pgfinterruptboundingbox}
\clip (0,0) rectangle (3,4);
\draw[fill=black] (3,3) circle[radius=0.1cm];
\end{pgfinterruptboundingbox}  
\end{tikzpicture}
\,\right) e\left(\,
\begin{tikzpicture}[scale=0.5,baseline={([yshift=-\the\dimexpr\fontdimen22\textfont2\relax] current bounding box.center)},]
\draw[dashed,xshift=-4cm,yshift=-4cm] (4,4)..controls (4,6)..(5,6);
\draw (1,2)--(2,2)--(0,0);
\end{tikzpicture}
\,\right)\]
\caption{$k_1>1$ rule of $c$ recurrence, \cref{cRR}. Similarly to \cref{dRRe}, a diagram is cut in two at every horizontal intersection with the upper diagonal. The first upwards unit towards the upper diagonal and the $0$-step after the intersection are discarded.\label{dRRg1}}
\end{figure}
For $k_1>1$, we have a sum over horizontal intersections with the upper diagonal, see \cref{dRRg1}.
The decomposition of the diagram has similarities with the one in the $e$-recurrence (\cref{dRRe}).
The second part of the diagram is treated the same but in the first part, instead of adding a $0$-step the first step is lowered by $1$.
Another difference is that here we are guaranteed to have at least one summand since the diagram starts above the upper diagonal.

Again a decomposition based on horizontal intersections with the upper diagonal results in one diagram where the upper diagonal becomes the main diagonal and (up to) one diagram containing the remainder.

As an example, consider
\begin{equation}
c(2,0,0,2)\overset{k_1>1}{=}c(1)e(0,2)\overset{e}{=}c(1)c(0,2)\overset{k_1=0}{=}\frac12c(1)^3=\frac12.
\end{equation}

\subsection{Diagrams up to fourth order}
\begin{figure}
\begin{tikzpicture}[scale=0.4695]
\draw	(0,4.8) node[right]{$n=1:$}
		(5,4.8) node[right]{$n=2:$}
		(20,4.8) node[right]{$n=3:$};
\diagce{(0,0)}{{1}}11
\diagce{(5,0)}{{2,0}}1{\frac12}
\diagce{(10,0)}{{0,2}}{\frac12}{\frac12}
\eequiv522
\diagce{(15,0)}{{1,1}}11
\diagce{(20,0)}{{3,0,0}}{1}{\frac12}
\diagce{(25,0)}{{0,3,0}}{\frac12}0
\diagce{(30,0)}{{0,0,3}}{\frac12}{\frac12}
\eequiv{20}33 \cequiv{25}32

\begin{scope}[yshift=-8cm]
\diagce{(0,0)}{{2,1,0}}1{\frac12}
\diagce{(5,0)}{{0,2,1}}{\frac12}{\frac12}
\eequiv032
\diagce{(10,0)}{{2,0,1}}1{\frac12}
\diagce{(15,0)}{{1,0,2}}{\frac12}{\frac12}
\eequiv{10}32
\diagce{(20,0)}{{1,2,0}}1{\frac12}
\diagce{(25,0)}{{0,1,2}}{\frac12}{\frac12}
\eequiv{20}32
\diagce{(30,0)}{{1,1,1}}11
\end{scope}
\end{tikzpicture}
\caption{Bloch diagrams for orders $1$ through $3$, labelled with their corresponding Bloch sequences and $c$ and $e$ values. By \cref{permutations}, the horizontal square brackets group diagrams that contribute the same operator content to the expansion of the eigenenergy. Whenever there are double brackets, the inner ones indicate diagrams that contribute the same operator content to the eigenvector expansion. We can now compute effective coefficients $c_\text{eff}$, $e_\text{eff}$ by summing the bracketed $c$, $e$ and assign them to the left-most diagram (the canonical representative).\label{diagrams123}}
\end{figure}
\begin{figure}
\begin{tikzpicture}[scale=0.4695]
\draw	(0,5.8) node[right]{$n=4:$};
\diagce{(0,0)}{{4,0,0,0}}1{\frac12}
\diagce{(5,0)}{{0,4,0,0}}{\frac12}0
\diagce{(10,0)}{{0,0,4,0}}{\frac12}0
\diagce{(15,0)}{{0,0,0,4}}{\frac12}{\frac12}
\eequiv044 \cequiv543
\diagce{(20,0)}{{3,1,0,0}}1{\frac12}
\diagce{(25,0)}{{0,3,1,0}}{\frac12}0
\diagce{(30,0)}{{0,0,3,1}}{\frac12}{\frac12}
\eequiv{20}43 \cequiv{25}42

\begin{scope}[yshift=-9cm]
\diagce{(0,0)}{{1,3,0,0}}1{\frac12}
\diagce{(5,0)}{{0,1,3,0}}{\frac12}0
\diagce{(10,0)}{{0,0,1,3}}{\frac12}{\frac12}
\eequiv043 \cequiv542
\diagce{(15,0)}{{2,1,1,0}}1{\frac12}
\diagce{(20,0)}{{0,2,1,1}}{\frac12}{\frac12}
\eequiv{15}42
\diagce{(25,0)}{{2,1,0,1}}1{\frac12}
\diagce{(30,0)}{{1,0,2,1}}{\frac12}{\frac12}
\eequiv{25}42

\begin{scope}[yshift=-9cm]
\diagce{(0,0)}{{2,2,0,0}}1{\frac12}
\diagce{(5,0)}{{0,2,2,0}}{\frac12}0
\diagce{(10,0)}{{0,0,2,2}}{\frac12}{\frac12}
\eequiv043 \cequiv542
\diagce{(15,0)}{{2,0,1,1}}1{\frac12}
\diagce{(20,0)}{{1,1,0,2}}{\frac12}{\frac12}
\eequiv{15}42
\diagce{(25,0)}{{1,2,1,0}}1{\frac12}
\diagce{(30,0)}{{0,1,2,1}}{\frac12}{\frac12}
\eequiv{25}42

\begin{scope}[yshift=-9cm]
\diagce{(0,0)}{{2,0,2,0}}1{\frac38}
\diagce{(5,0)}{{2,0,0,2}}{\frac12}{\frac14}
\diagce{(10,0)}{{0,2,0,2}}{\frac38}{\frac38}
\eequiv043 \cequiv042
\diagce{(15,0)}{{1,2,0,1}}1{\frac12}
\diagce{(20,0)}{{1,0,1,2}}{\frac12}{\frac12}
\eequiv{15}42
\diagce{(25,0)}{{1,1,2,0}}1{\frac12}
\diagce{(30,0)}{{0,1,1,2}}{\frac12}{\frac12}
\eequiv{25}42

\begin{scope}[yshift=-9cm]
\diagce{(0,0)}{{3,0,1,0}}1{\frac12}
\diagce{(5,0)}{{3,0,0,1}}1{\frac12}
\cequiv042
\diagce{(10,0)}{{0,3,0,1}}{\frac12}0
\diagce{(15,0)}{{0,1,0,3}}{\frac12}{\frac12}
\cequiv{10}42
\diagce{(20,0)}{{1,0,3,0}}{\frac12}0
\diagce{(25,0)}{{1,0,0,3}}{\frac12}{\frac12}
\eequiv046 \cequiv{20}42
\diagce{(30,0)}{{1,1,1,1}}11
\end{scope}
\end{scope}
\end{scope}
\end{scope}
\end{tikzpicture}
\caption{Bloch diagrams for order $4$, arranged analogously to \cref{diagrams123}.\label{diagrams4}}
\end{figure}
\Cref{diagrams123,diagrams4} show all Bloch diagrams up to fourth order with their $c$ and $e$ values.
Diagrams producing equivalent operator content are grouped together.
One can easily verify that the results for the energy are consistent with Bloch's, \cref{energybloch}, by summing up all the grouped $e$ coefficients and getting the number of convex diagrams in the group.

Fourth order perturbation theory is not exactly an outlandish endeavour, yet has sufficient complexity that even though the associated Talk page has since 2010 noted that there are mistakes in the expressions listed on Wikipedia, to date no one has corrected them~\cite{wiki}.
There are $35$ Bloch sequences for $n=4$, of which $14$ are convex, $13$ need to appear in the energy series ($4$ if $V$ completely off-diagonal), and $26$ need to appear in the normalised eigenvector series ($12$ if $V$ is completely off-diagonal).

\section{Conclusion\label{conclusio}}
We have shown how to explicitly solve the conventionally normalised Rayleigh-Schr\"odinger perturbation series to arbitrary order.
The structure of earlier results for unnormalised vectors is readily adapted to this problem.
The normalisation necessarily increases the number of terms in the expansion.
We surveyed how the number of terms varies between different methods, and how to identify equivalent diagrams.
An efficient summation of these equivalent diagrams remains an open problem, and there is likely no simple solution.

No matter how efficiently terms are summarised, their number grows exponentially with the order of perturbation.

Counting and analysing Bloch diagrams and associated quantities offers a rich trove of combinatorics problems, many of which may have already been studied in the context of paths, random walks, and bridges.

\begin{acknowledgments}
We thank the OpenSuperQ project (820363) of the EU Flagship on Quantum Technology, H2020-FETFLAG-2018-03, for support.
\end{acknowledgments}

\bibliography{21feb_arxiv}

\end{document}